\newtheorem{theorem}{Theorem}
\newtheorem{lemma}{Lemma}
\theoremstyle{definition}
\theoremstyle{remark}
\theoremstyle{corollary}
\begin{document}\pagestyle{empty}
 
 \IEEEoverridecommandlockouts
    
\title{User Preference Aware Lossless Data Compression at the Edge}

\author{
    Yawei~Lu,~\IEEEmembership{Student Member,~IEEE},~Wei~Chen,~\IEEEmembership{Senior Member,~IEEE},\\~and~H. Vincent Poor,~\IEEEmembership{Fellow,~IEEE}
    
    \thanks{Yawei Lu and Wei Chen are with the Department of Electronics Engineering, Tsinghua University, Beijing, 100084, China, e-mail: lyw15@mails.tsinghua.edu.cn, wchen@tsinghua.edu.cn. Yawei Lu is also with the Department of Electrical Engineering, Princeton University, New Jersey, 08544, USA, e-mail: yawil@princeton.edu.}

    \thanks{H. Vincent Poor is with the Department of Electrical Engineering, Princeton University, New Jersey, 08544, USA, e-mail: poor@princeton.edu.}
    
    \thanks{ This research was supported in part by the U.S. National Science Foundation under Grants CCF-093970 and CCF-1513915, the Chinese national program for special support for eminent professionals (10,000-talent program), Beijing Natural Science Foundation (4191001), and the National Natural Science Foundation of China under Grant No. 61671269. }

}
  
    \maketitle
    \pagestyle{fancy}  
    \thispagestyle{fancy} 
    \lhead{} 
    \chead{} 
    \rhead{\thepage} 
    \lfoot{} 
    \cfoot{} 
    \rfoot{} 
    \renewcommand{\headrulewidth}{0pt} 
    \renewcommand{\footrulewidth}{0pt} 
  
\maketitle\thispagestyle{empty}

\begin{abstract}
Data compression is an efficient technique to save data storage and transmission costs. However, traditional data compression methods always ignore the impact of user preferences on the statistical distributions of symbols transmitted over the links. Notice that the development of big data technologies and popularization of smart devices enable  analyses on user preferences based on data collected from personal handsets. This paper presents a user preference aware lossless data compression method, termed edge source coding, to compress data at the network edge. An optimization problem is formulated to minimize the expected number of bits needed to represent a requested content item in edge source coding.  For  edge source coding under discrete user preferences,  DCA (difference of convex functions programming algorithm) based and $k$-means++  based algorithms are proposed to give codebook designs. For  edge source coding under continuous user preferences, a sampling method is applied to give codebook designs. In addition, edge source coding is extended to the two-user case and codebooks are elaborately designed to utilize  multicasting opportunities.  Both theoretical analysis and simulations demonstrate the optimal codebook design should take into account  user preferences.
\end{abstract}

\begin{IEEEkeywords}
    Lossless data compression,  user preference, edge source coding, codebook design, difference of convex functions
\end{IEEEkeywords}
\IEEEpeerreviewmaketitle

\section{Introduction}
With the popularization of smart devices and the rise of mobile multimedia applications, network traffic has undergone an explosive growth in the past decades.  By using fewer bits to encode information than the original representation, data compression is an efficient  technique to reduce the cost of storing or transmitting data. Traditional data compression methods are always source-based. The major design criterion is the average compression ratio for content  generated by the information source. In the era of 5G, a large number of various content items are generated at the network.  Because each user might only be interested in a few kinds of content items, a source-based data compression algorithm probably cannot achieve a  satisfactory compression ratio for these specific kinds of content items. Recently the development of big data technologies and  popularization of smart devices enable  us to analyze user preferences and predict  user requests based on private data collected from personal handsets. It becomes possible to improve the efficiency of data compression according to  user preferences. 

Since C. E. Shannon published his famous source coding theorem \cite{shannon}, data compression has been widely studied and various data compression schemes have been proposed. The LZ77 algorithm was presented in \cite{lz77}, which is the  basis of several ubiquitous compression schemes, including ZIP and GIF. To improve the compression ratio, lossy compression algorithms  for various types of content have also been developed, such as MPEG-4 for videos \cite{mpeg4} and JPEG for images \cite{jpeg}. To solve the scalability problem resulting from packet classification in network services, a lossy compression based classifier was designed to reduce the classifier size \cite{rottenstreich}.  Recently considerable attention has been paid to data compression in sensor networks \cite{sensor1}-\cite{sensor3}. A lossy compression algorithm was proposed to cope with large volumes of  data generated by meters in smart distribution systems \cite{sensor1}. To reduce the power consumption of wireless sensors in Internet of Things  applications, \cite{sensor2} proposed a hybrid data compression scheme, in which both lossless and lossy techniques were used. For wireless sensor networks with correlated sparse sources, a complexity-constrained distributed variable-rate
quantized compression sensing method  was developed in \cite{sensor3}.  
Machine learning techniques were  also applied in  data compression and transmission \cite{source-coding-deep-learning-1}-\cite{source-coding-deep-learning-2}. A joint source-channel coding design of text was provided by deep learning in \cite{source-coding-deep-learning-1}. End-to-end communications were implemented by neural networks in \cite{source-coding-deep-learning-2}.

With the coming of Internet of Things and 5G, the edge of networks is attracting more and more attentions \cite{edge-1}, because it can handle the concerns of  latency requirement, bandwidth saving, and data security \cite{edge-2}.  Edge computing has the potential to support ``smart city" \cite{edge-3},  improve vehicle services \cite{edge-4}, and implement task offloading \cite{edge-5}. As edge nodes are closer to users, caching in the edge is capable of reducing the delivery latency and network congestion. Various content delivery and caching schemes were developed for edge networks \cite{edge-6}-\cite{lu2}. The energy efficiency of edge caching was revealed in \cite{edge-9}. In \cite{edge-10}, a learning-based method was proposed for edge caching. 
Furthermore, many papers have focused on improve the performance of edge caching via user preferences \cite{yanqi}-\cite{user-pref-1}.  In edge networks, user preferences also contribute to improve streaming
video services \cite{user-pref-2},   the Quality-of-Experience \cite{user-pref-3}, wireless resource allocations \cite{user-pref-4}, and device-to-device content deliveries \cite{user-pref-5}. 

In this paper, we are interested in data compression at the edge under user preferences.  More specifically, we consider the situation in which users are located at the network edge and are connected to  content providers through a service provider. In traditional communication systems,   data compression processes are executed at content providers when content items are generated.  In these compression processes, more common symbols should be assigned with shorter codewords to minimize the expected codeword length according to  information theory \cite{inf}. However, because   user interests vary with  content items,  a symbol that is common in the whole set of content items might not be common in the requested content items.  In other words, the statistical distribution of symbols in the transmission link is not likely to be identical to that in the information source due to user preferences.  As a result, it is necessary to re-compress content items at  the service provider according to user preferences. 
In this paper, a  user preference aware lossless data compression scheme, termed edge source coding,  is proposed to compress data according to finitely many codebooks at the network edge. 

To obtain the optimal codebooks, we formulate an optimization problem for edge source coding, which is however nonconvex for the general case.  For edge source coding with a single codebook, we solve the optimization problem via the method of Lagrange multipliers. An optimality condition is further presented. For  edge source coding under discrete user preferences, the optimization problem reduces to a clustering problem.  DCA (difference of convex functions programming algorithm) based and $k$-means++ based algorithms are proposed to give codebook designs \cite{$k$-means}. For  edge source coding under continuous user preferences, a sampling based method is applied to give codebook designs. We further extend edge source coding to the two-user case and present two algorithms to reduce transmission costs by using  multicasting opportunities. Finally, simulation results are presented to demonstrate the potential of edge source coding and the performance of our algorithms.

 The rest of the paper is organized as follows. Section \ref{section2} presents the content distribution system model and the concept of edge source coding. And then the formulation of edge source coding  is given in Section \ref{section3}. Sections \ref{single_user_discrete} presents codebook designs for edge source coding. In Section \ref{section6}, edge source coding for two users with preference correlation is studied. Simulation results are presented in Section \ref{section7}. Finally, Section \ref{section8} concludes this paper and lists some directions for future research.

\section{System Model}\label{section2}
In this section, we present our system model, based on which the key idea of edge source coding is introduced.

\subsection{Content Distribution System}\label{system_model}

 \begin{figure} 
    \centering
    \includegraphics[width=16cm]{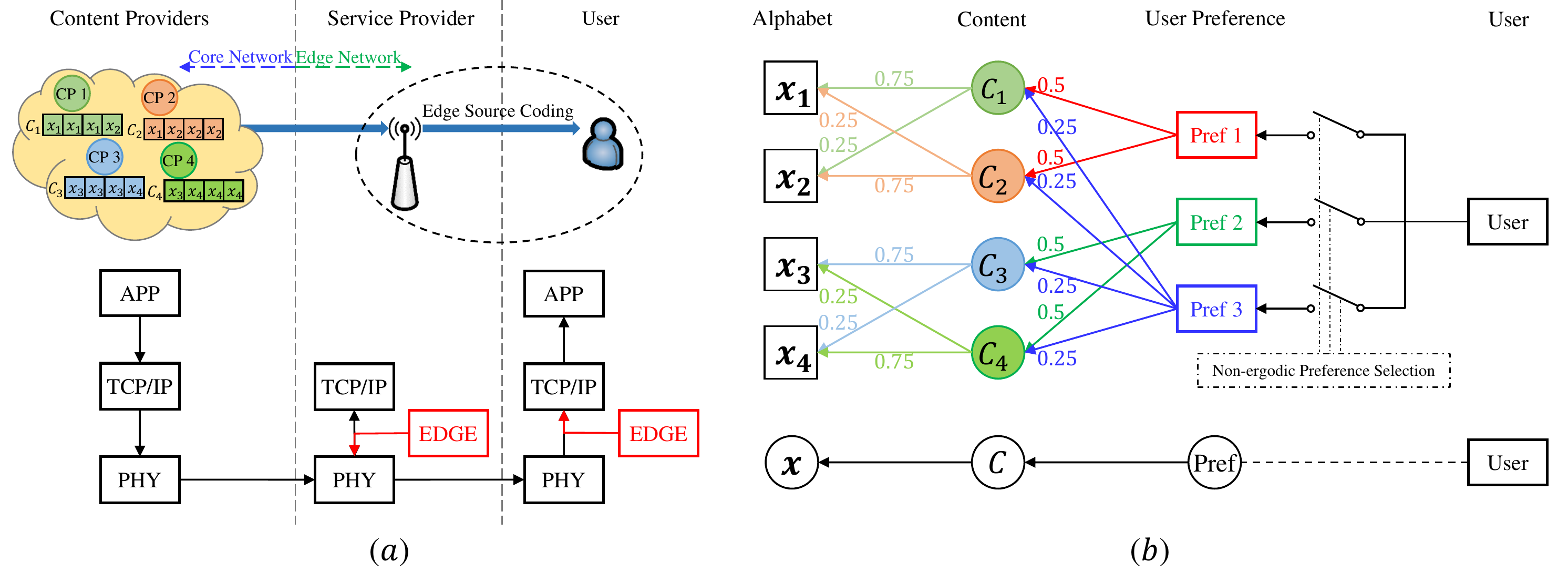}
    \caption{Content distribution system and a demo to illustrate how user preferences impact the statistical distributions of  symbols transmitted in the network edge. In this realization, the alphabet is $\mathcal{X}=\{x_1,x_2,x_3,x_4\}$. The four content items $C_1$, $C_2$, $C_3$, and $C_4$ generated by the four CPs are with SPVs $\bm{p}_1=[0.75,0.25,0,0], \bm{p}_2=[0.25,0.75,0,0],\bm{p}_3=[0,0,0.75,0.25],$ and $\bm{p}_4=[0,0,0.25,0.75]$, respectively. The user has three possible preferences. Under Pref 1, the symbols $x_3$ and $x_4$ never appear in the link from the service provider to the user.}    
    \label{CDS}  
\end{figure}

Consider a content distribution system as shown in Fig. \ref{CDS}.  A user is interested in content items generated by content providers (CPs) scattered around the Web. A base station (BS) serves as service provider to satisfy the user requests.\footnote{In this paper, the terms base station and service provider are used interchangeably. } Assume each CP  produces its content item by choosing symbols from the same alphabet $\mathcal{X}=\{x_1,...,x_N\}$ and each content item  consists of $L$ symbols.     The $L$ symbols of a content item are generated independently and based on the same discrete distribution, denoted by $\bm{p}=[p_1,...,p_N]$, where $p_n$ denotes the probability that a symbol is $x_n$.   The vector $\bm{p}$ is an attribute of that content item and will be referred to as the symbol probability vector (SPV). Note that a content item with SPV $\bm{p}$ has entropy $LH(\bm{p})=-L\sum_{n=1}^{N}p_n\log p_n$.\footnote{$H(\cdot)$ denotes the entropy and $\log$ represents the binary logarithm.} 
 All the feasible SPVs form a set $\Omega=\large \{[p_1,...,p_N]:\sum_{n=1}^{N}p_n=1\text{ and } p_n\ge 0 \text{ for } n \in [N]  \}$.\footnote{For a positive integer $N$, $[N]$ represents the set $\{1,2,...,N\}.$}

In the content distribution system, the user issues requests for the content items.  Let  $f(\bm{p})$ describe the interest of the user in the  content item with SPV $\bm{p}$. Then $f$ is a probability density function with support set $\Omega$.  The function $f$ will also be referred to as the user preference. To give further insights, we present a tripartite  graph model as shown in Fig. \ref{CDS}$(b)$.   As stated before, each content item has its own probability distribution of the symbols, i.e., SPV. The symbol distributions of two content items can be totally different. We assume that each user has an individual preference that can be characterized by the request probability for various content items. In practice, the empirical request probability can be learned from a user's historical requests. Further, we assume that a user has a fixed preference in our considered timescale. In other words, a non-ergodic preference selection is assumed to determine the preference of a user newly accessing to the service provider, or more particularly, a BS. In other words, the random symbols of the equivalent source at the edge are generated according to a simple probabilistic graphical model, as shown at the bottom of Fig. \ref{CDS}$(b)$.

 The BS responds to a user request and initiates an end-to-end transmission to satisfy it. To improve  transmission efficiency, data compression techniques can be used to  eliminate statistical redundancy of the original symbol sequences. 
Traditionally, data compression is  executed only in application (APP) layers at the content providers.  Content items are compressed according to the statistical distributions of symbols, i.e., their SPVs, and are associated with codebooks for decoding. Edge information including user preferences is always ignored in traditional data compression methods, which however have a significant impact on the statistical distributions of symbols transmitted at the edge, as illustrated in Fig. \ref{CDS}. To reduce the transmission cost at the network edge, we present a user preference aware lossless data compression method, termed  edge source coding, to re-compress original symbol sequences at the service provider. As shown in Fig. \ref{CDS}$(a)$, edge source coding exploits edge information in physical (PHY) layer transmissions.

\subsection{Edge Source Coding}\label{ESC}

In  edge source coding, the BS compresses the content items according to finitely many binary codebooks to satisfy the user requests. These binary codebooks are cached in both the BS and the user. In this way, the encoded symbol sequences do not need to be associated with a whole codebook for decoding. Let $K$ be the total number of codebooks used in edge source coding. In the $k$-th codebook, the symbol $x_n$ is represented by $l_{k,n}$ bits. If the BS applies the $k$-th codebook to encode a content item with SPV $\bm{p}$, this content item can be represented by $L\sum_{n=1}^{N}p_nl_{k,n}=L(H(\bm{p})+D(\bm{p}||\bm{q_k}))$ bits, where $\bm{q}_k=[q_{k,1},...,q_{k,n}]$ and $q_{k,n}=2^{-l_{k,n}}$.\footnote{$D(\cdot||\cdot)$  denotes the Kullback-Leibler divergence.} To satisfy the request for this content item, the BS only needs to transmit these $L\sum_{n=1}^{N}p_nl_{k,n}$ bits. The user can decode the received bits by trying all the cached codebooks. In the rest of this paper, we also use $\bm{q}_k$ to represent the $k$-th codebook.  According to  Kraft's inequality \cite{inf}, $\bm{q}_k$ should satisfy 
\begin{equation}\label{kraft}
\sum_{n=1}^{N}q_{k,n}\le 1
\end{equation}
to ensure decodability. 

As there are $K$ codebooks, the minimum cost to satisfy a request for the content item with SPV $\bm{p}$ is given by
$L(H(\bm{p})+\min_{k\in [K]} D(\bm{p}||\bm{q_k}))$. Because the SPV of the requested content item obeys a probability density function $f$, the transmission cost for a requested content item is given by 
\begin{equation}\label{trans_cost}
\begin{split}
\int_{\bm{p}\in\Omega}f(\bm{p})\left( L\left(H(\bm{p})+\min_{k\in [K]} D(\bm{p}||\bm{q_k})\right)\right) d\bm{p}.
\end{split}
\end{equation}
 In this paper, we aim to design  codebooks to minimize the transmission cost, i.e., expected number of bits to represent a requested content item.

\section{User Preference  Aware Compression: A Problem Formulation}\label{section3}
In this section, we formulate an optimization problem for  edge source coding  and solve it for the $K=1$ case.
In addition, an optimality condition for the optimization problem is presented.

As stated in Subsection \ref{ESC}, the codebooks can be described by $\{\bm{q}_k:k\in [K]\}$. Kraft's inequality is a sufficient  and necessary condition for the existence of a uniquely decodable code. We only require $\bm{q}_k$ to obey  Kraft's inequality and relax the constraint that each component of $\bm{q}_k$ should be a negative integer power of two. Note that Eq. (\ref{trans_cost}) can be rewritten as 
\begin{equation}\label{re_trans_cost}
\begin{split}
L\int_{\bm{p}\in\Omega}f(\bm{p})H(\bm{p})d\bm{p}+L\int_{\bm{p}\in\Omega}f(\bm{p})\min_{k\in [K]} D(\bm{p}||\bm{q_k})d\bm{p}
\end{split}
\end{equation}
To minimize the transmission cost, we only need to minimize $\int_{\bm{p}\in\Omega}f(\bm{p})\min_{k\in[K]} D(\bm{p}||\bm{q_k})d\bm{p}$ in Eq. (\ref{re_trans_cost}), which indicates the additional transmission cost per symbol under user preference $f$ due to the mismatch between  SPVs and  codebooks. 

Let $\Omega_k$ be the set of SPVs that  have the smallest Kullback-Leibler divergence with codebook $\bm{q}_k$ among all the codebooks, i.e.,
\begin{equation}\label{split_omega}
    \Omega_k = \{\bm{p}:\bm{p}\in \Omega, D(\bm{p}||\bm{q_k})\le D(\bm{p}||\bm{q_j}) \text{ for } j \in[K]\}.
\end{equation}
One can see that the sets $\Omega_k$  are disjoint, $\Omega=\cup_{k=1}^K\Omega_k$, and $\min_{j\in [K]} D(\bm{p}||\bm{q}_j) = D(\bm{p}||\bm{q}_k) $ for $\bm{p}\in \Omega_k$.\footnote{If a point $\bm{p}$ satisfies $D(\bm{p}||\bm{q}_{k_1})=D(\bm{p}||\bm{q}_{k_2})\le D(\bm{p}||\bm{q}_j) $ for $j\in [K],$ then $\bm{p}$ can be classified into $\Omega_{k_1}$ and $\Omega_{k_2}$ randomly to ensure that the sets $\Omega_k$  are disjoint.}  To minimize the transmission cost per symbol, a content item with SPV belonging to $\Omega_k$ should be encoded according to codebook $\bm{q_k}$.
The following optimization problem gives the optimal codebook design:
\begin{equation}\label{op_1}
\begin{split}
\min_{\bm{q}_k}\quad  &\sum_{k=1}^{K}\int_{\bm{p}\in \Omega_k}f(\bm{p})D(\bm{p}||\bm{q}_k)d\bm{p}\\
\text{s.t.}\quad & \text{Eqs. } (\ref{kraft}) \text{ and } (\ref{split_omega}),\\
& q_{k,n}\ge 0, k\in[K],n\in[N].
\end{split}
\end{equation}
In the objective function, the integral over $\Omega$ is calculated by summing up the integrals over its subsets $\Omega_k$.  Note that swapping the values of different $\bm{q}_k$ does not change the objective value of problem (\ref{op_1}). Thus, problem (\ref{op_1}) has more than one optimal solutions, which further implies the nonconvexity of problem (\ref{op_1}). 

Because $\Omega_k$ is a set depending on ${\bm{q}_k},$ it is nontrivial to solve problem (\ref{op_1}) for the general case. We first solve problem (\ref{op_1}) for $K=1$ to give greater insight.
In this case, there is only one codebook $\bm{q}_1$ and thus we denote  it  as $\bm{q}=[q_1,...,q_n]$ for simplicity. Theorem \ref{theo1} presents the optimal codebook design.
\begin{theorem}\label{theo1}
    For $K=1$, the optimal codebook for  edge source coding is given by \footnote{$\mathbb{E}\{\cdot\}$  represents the expectation of a random variable.} 
    \begin{equation}\label{opt_solution}
    q_n = \frac{\mathbb{E}\{p_n\}}{\sum_{j=1}^{N}\mathbb{E}\{p_j\}}.
    \end{equation}
 
\end{theorem}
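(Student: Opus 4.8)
The plan is to exploit the additive decomposition of the Kullback--Leibler divergence to turn problem (\ref{op_1}) with $K=1$ into an elementary finite-dimensional convex program. Writing $D(\bm{p}\|\bm{q}) = \sum_{n=1}^{N} p_n\log p_n - \sum_{n=1}^{N} p_n\log q_n$ and substituting into the objective $\int_{\bm{p}\in\Omega} f(\bm{p})D(\bm{p}\|\bm{q})\,d\bm{p}$, the term $\int_{\bm{p}\in\Omega} f(\bm{p})\sum_{n} p_n\log p_n\,d\bm{p}$ does not involve $\bm{q}$ and may be discarded. After interchanging the finite sum with the integral, minimizing the objective over the Kraft region $\{\bm{q}: q_n\ge 0,\ \sum_{n} q_n\le 1\}$ is equivalent to \emph{maximizing} $g(\bm{q}) := \sum_{n=1}^{N} \mathbb{E}\{p_n\}\log q_n$, where $\mathbb{E}\{p_n\} = \int_{\Omega} f(\bm{p})p_n\,d\bm{p}$.

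Next I would observe that at any maximizer the Kraft constraint is tight: since $g$ is nondecreasing in each coordinate $q_n$ with $\mathbb{E}\{p_n\}>0$, if $\sum_{n} q_n < 1$ one could scale $\bm{q}$ up and strictly increase $g$, and $\sum_{n} p_n = 1$ forces $\sum_{n}\mathbb{E}\{p_n\}>0$. Coordinates with $\mathbb{E}\{p_n\}=0$ correspond to symbols appearing with probability zero under $f$ and contribute nothing (using $0\log 0 = 0$), so they may be set to $0$; on the remaining coordinates $g$ is strictly concave and the feasible set is the probability simplex. Forming the Lagrangian $\mathcal{L}(\bm{q},\lambda) = \sum_{n} \mathbb{E}\{p_n\}\log q_n - \lambda\bigl(\sum_{n} q_n - 1\bigr)$ and setting $\partial\mathcal{L}/\partial q_n = \mathbb{E}\{p_n\}/q_n - \lambda = 0$ yields $q_n = \mathbb{E}\{p_n\}/\lambda$; summing over $n$ under $\sum_{n} q_n = 1$ gives $\lambda = \sum_{j=1}^{N}\mathbb{E}\{p_j\}$, which is exactly (\ref{opt_solution}).

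As an alternative to the Lagrangian argument, and to certify global optimality with an explicit equality condition, I would invoke Gibbs' inequality directly. Writing $S = \sum_{j}\mathbb{E}\{p_j\}$ and $q_n^\star = \mathbb{E}\{p_n\}/S$, for any feasible $\bm{q}$ the $q_n^\star$ form a probability vector, so by concavity of the logarithm (Jensen's inequality),
\begin{equation*}
\sum_{n=1}^{N} \mathbb{E}\{p_n\}\log\frac{q_n}{q_n^\star} = S\sum_{n=1}^{N} q_n^\star\log\frac{q_n}{q_n^\star} \le S\log\Bigl(\sum_{n=1}^{N} q_n\Bigr) \le 0,
\end{equation*}
with equality iff $q_n = q_n^\star$ for all $n$. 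Hence $g(\bm{q})\le g(\bm{q}^\star)$, establishing optimality of (\ref{opt_solution}).

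I expect the only mildly delicate points to be bookkeeping rather than conceptual: justifying the sum--integral interchange (immediate, since $f$ is a density and each $|p_n\log q_n|$ is integrable on the compact set $\Omega$), arguing the Kraft inequality is active, and handling symbols with $\mathbb{E}\{p_n\}=0$ via the convention $0\log 0 = 0$. None of these is a genuine obstacle; the essential content is simply that the $f$-averaged mismatch penalty $\int f(\bm{p})D(\bm{p}\|\bm{q})\,d\bm{p}$ is minimized by matching the single codebook to the \emph{expected} SPV.
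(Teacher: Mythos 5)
Your proposal is correct, and its core is the same calculation as the paper's: form the Lagrangian for the relaxed Kraft constraint, set $\partial/\partial q_n$ of $\int_\Omega f(\bm{p})D(\bm{p}\|\bm{q})\,d\bm{p} + \lambda(\bm{1}^T\bm{q}-1)$ to zero to get $q_n = \mathbb{E}\{p_n\}/\lambda$, and normalize. What you add beyond the paper is the justification that the paper leaves implicit: you argue explicitly that the Kraft inequality must be active at an optimum, you handle coordinates with $\mathbb{E}\{p_n\}=0$, and—most substantively—you certify \emph{global} optimality via the Gibbs/Jensen argument $\sum_n \mathbb{E}\{p_n\}\log(q_n/q_n^\star) \le S\log(\sum_n q_n)\le 0$, whereas the paper relies only on the first-order stationarity condition (global optimality there rests on the unstated convexity of $-\sum_n \mathbb{E}\{p_n\}\log q_n$ over the Kraft region). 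So the route is essentially the paper's, but your version is the more complete one; the Gibbs-inequality certificate is a nice self-contained alternative that also pins down uniqueness of the minimizer.
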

\begin{proof}
Let us consider the Lagrange function
\begin{equation}\label{lagrangian}
L(\bm{q},\lambda)=\int_{\bm{p}\in \Omega}f(\bm{p})D(\bm{p}||\bm{q})d\bm{p} + \lambda(\bm{1}^T\bm{q}-1),
\end{equation}
where $\lambda$ is the Lagrange multiplier and $\bm{1}$ is an $N$-dimensional vector all of whose components are equal to 1. The partial derivative of $ L(\bm{q},\lambda)$ with respect to $q_n$ is given by
\begin{equation}
\begin{split}
\frac{\partial L(\bm{q},\lambda)}{\partial q_n} & =-\frac{1}{q_n}\int_{\bm{p}\in \Omega}f(\bm{p})p_nd\bm{p} + \lambda \\
& =\lambda - \frac{\mathbbm{E}\{p_n\}}{q_n}.
\end{split}
\end{equation}
By the optimality conditions   $\frac{\partial L(\bm{q},\lambda)}{\partial q_n} = 0 $ and $\bm{1}^T\bm{q}=1$, the optimal codebook can be derived as Eq. (\ref{opt_solution}).
\end{proof}

In the optimal codebook,  the codeword of $x_n$ consists of $-\log q_n$ bits. Eq. (\ref{opt_solution}) implies that the more frequently the requested content items contain a  symbol, the shorter the codeword corresponding to this symbol will be. To improve the data compression efficiency at the network edge,   user preferences must be taken into account  in designing the codebook. 

Based on Theorem \ref{theo1}, the following theorem gives an optimality condition for  any values of the number of  codebooks, i.e., $K$.
\begin{theorem}\label{coro_1}
   The optimal codebook design for edge source coding satisfies
    \begin{equation}\label{opt_cond}
        q_{k,n}=\frac{\mathbb{E}\{p_n|\bm{p}\in \Omega_k\}}{\sum_{j=1}^{N}\mathbb{E}\{p_j|\bm{p}\in \Omega_k\}},
    \end{equation}
    where $\Omega_k$ is defined in Eq. (\ref{split_omega}).    
\end{theorem}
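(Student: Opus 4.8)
The plan is to reduce the optimality condition for general $K$ to the single-codebook case already handled in Theorem \ref{theo1}. The starting observation is that problem (\ref{op_1}), once the partition $\{\Omega_k\}$ induced by the codebooks is fixed, decouples completely across the index $k$: the objective $\sum_{k=1}^{K}\int_{\bm{p}\in\Omega_k}f(\bm{p})D(\bm{p}\|\bm{q}_k)\,d\bm{p}$ is a sum of terms, the $k$-th of which involves only $\bm{q}_k$, and the Kraft constraint (\ref{kraft}) and nonnegativity constraints likewise act on each $\bm{q}_k$ separately. So at an optimal solution, each $\bm{q}_k$ must minimize $\int_{\bm{p}\in\Omega_k}f(\bm{p})D(\bm{p}\|\bm{q}_k)\,d\bm{p}$ subject to $\sum_{n=1}^N q_{k,n}\le 1$ and $q_{k,n}\ge 0$, holding the region $\Omega_k$ fixed.

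First I would make the ``holding $\Omega_k$ fixed'' step precise. The subtlety is that $\Omega_k$ depends on the codebooks through (\ref{split_omega}), so one cannot naively treat it as a constant. The clean way around this is a partial-optimality (block-coordinate) argument: if $\{\bm{q}_k^\star\}$ is optimal for (\ref{op_1}) with induced partition $\{\Omega_k^\star\}$, then replacing $\bm{q}_{k}^\star$ by any other feasible $\bm{q}_k'$ can only change the cost contributed by the points currently assigned to $\Omega_k^\star$ in a way that is lower-bounded by $\int_{\Omega_k^\star} f(\bm{p})\min_j D(\bm{p}\|\bm{q}_j)\,d\bm{p}$ after re-partitioning; hence in particular $\bm{q}_k^\star$ must already minimize $\int_{\Omega_k^\star} f(\bm{p}) D(\bm{p}\|\bm{q}_k)\,d\bm{p}$ over feasible $\bm{q}_k$, since any strictly smaller value there would yield a strictly smaller objective for (\ref{op_1}) (the re-partitioned objective is no larger than the objective with partition held fixed). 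I expect this reduction — carefully arguing that improving one $\bm{q}_k$ on its own region cannot be blocked by the coupling through the partition — to be the main obstacle, though it is more a matter of stating the argument cleanly than of technical difficulty.

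Next I would invoke Theorem \ref{theo1} on each sub-problem. Minimizing $\int_{\bm{p}\in\Omega_k}f(\bm{p})D(\bm{p}\|\bm{q}_k)\,d\bm{p}$ subject to Kraft and nonnegativity is exactly the $K=1$ problem of Theorem \ref{theo1}, except that the density $f$ is replaced by its restriction to $\Omega_k$. The proof of Theorem \ref{theo1} goes through verbatim with the expectation $\mathbb{E}\{p_n\}=\int_\Omega f(\bm{p})p_n\,d\bm{p}$ replaced by $\int_{\Omega_k} f(\bm{p})p_n\,d\bm{p}$, which equals $\mathbb{E}\{p_n \mathbbm{1}[\bm{p}\in\Omega_k]\}$ and, up to the normalizing factor $\Pr(\bm{p}\in\Omega_k)$, is $\mathbb{E}\{p_n\mid\bm{p}\in\Omega_k\}$. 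That normalizing factor cancels between numerator and denominator of (\ref{opt_solution}), so the optimal $\bm{q}_k$ is
\begin{equation}
q_{k,n}=\frac{\int_{\Omega_k} f(\bm{p})p_n\,d\bm{p}}{\sum_{j=1}^N \int_{\Omega_k} f(\bm{p})p_j\,d\bm{p}}=\frac{\mathbb{E}\{p_n\mid\bm{p}\in\Omega_k\}}{\sum_{j=1}^N \mathbb{E}\{p_j\mid\bm{p}\in\Omega_k\}},
\end{equation}
which is precisely (\ref{opt_cond}). I would also note in passing that the Kraft constraint is tight at the optimum (because $D(\bm{p}\|\bm{q}_k)$ is strictly decreasing in each $q_{k,n}$), which is what licenses using the equality $\bm{1}^T\bm{q}_k=1$ from the proof of Theorem \ref{theo1}, and that one should assume $\Pr(\bm{p}\in\Omega_k)>0$ — if a region has zero probability the corresponding codebook is irrelevant and (\ref{opt_cond}) can be read as vacuous or assigned arbitrarily.
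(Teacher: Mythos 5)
Your proposal is correct and follows essentially the same route as the paper: the paper's proof simply fixes the partition $\{\Omega_k\}$ and applies Theorem \ref{theo1} to each region, exactly as you do, with your block-coordinate argument and the remarks on tightness of Kraft's inequality and on regions of zero probability supplying rigor that the paper leaves implicit.
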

\begin{proof}
    After splitting $\Omega$ into $K$ disjoint subsets $\Omega_k$, the codebook $\bm{q}_k$ is used to encode content items with SPV in $\Omega_k$. Applying Theorem \ref{theo1} on each set $\Omega_k$ yields Eq. (\ref{opt_cond}).
\end{proof}

Theorem \ref{coro_1} reveals the coupling between the optimal  $\bm{q}_k$ and $\Omega_k$.
Note that the inequality $D(\bm{p}||\bm{q_k})\le D(\bm{p}||\bm{q_j}) $ can be expanded as
\begin{equation}
\sum_{n=1}^{N}p_n\log \frac{q_{j,n}}{q_{k,n}}\le 0
\end{equation}
which is a linear inequality in the vector $\bm{p}$. Then $\Omega_k$ is a convex polytope  characterized by hyperplanes defined by $\{\bm{q}_k:k\in [K]\}$. The vector $\bm{q}_k$ can be given by conditional expectations over $\Omega_k.$  To some extend, $\bm{q}_k$ can be viewed as a central point in $\Omega_k$.

\section{Codebook Design for User Preference Aware Compression}\label{single_user_discrete}

In this section, we investigate codebook design for edge source coding under user preference $f$. If the user is interested only in finitely many content items, $f$ reduces to a probability mass function.  Then $f$ is referred to as a discrete user preference. If $f$ is a continuous probability density function over $\Omega$, $f$ is referred to as a continuous user preference. Codebook designs for both discrete and continuous user preferences are considered in this section. It will be shown that  codebook designs rely on  user preferences in both the two cases. 

\subsection{DCA based Codebook Design under Discrete User Preferences}\label{dca_single}

In this subsection, edge source coding under discrete user preferences is studied. Mathematically,  $f(\bm{p})$ is nonzero only at finite points. As a result,  the edge source coding problem becomes  a  clustering problem. A DCA (difference of convex functions programming algorithm) based algorithm is proposed to give a codebook design. 

Denote  the set of nonzero points of $f$ as $\Omega^d=\{\bm{p}_1,...,\bm{p}_J \}$. We have $\Omega^d\subseteq \Omega$ and $|\Omega^d|=J$. Let $f_j$ be the  probability that the content item with SPV $\bm{p}_j$ is requested. Then, we have $f_j=f(\bm{p}_j)$ and $\sum_{j=1}^{J}f_j=1$. In this case, problem (\ref{op_1}) becomes the following form:
\begin{equation}\label{op_2}
\begin{split}
\min_{\bm{q}_k}\quad  &\sum_{k=1}^{K}\sum_{j\in \Omega_k^d}f_jD(\bm{p}_j||\bm{q}_k)\\
\text{s.t.}\quad & \bm{1}^T\bm{q}_k\le 1, k\in [K], \hspace{2.31cm} (\ref{op_2}.a)\\
& q_{k,n}\ge 0, k\in[K],n\in[N], \hspace{1cm} (\ref{op_2}.b)
\end{split}
\end{equation}
where
\begin{equation}\label{clustering}
    \Omega_k^d=\{j:D(\bm{p}_j||\bm{q}_k) \le D(\bm{p}_j||\bm{q}_i) \text{ for } i \in[K] \}.
\end{equation}
Similarly,
a content item with SPV in $\Omega_k^d$ should be encoded by codebook $\bm{q}_k$ in order to minimize the transmission cost.

Problem (\ref{op_2}) can be viewed as  a clustering problem. The $J$ points in $\Omega^d$ are clustered into $K$ subsets.  In the discrete case, 
Eq. (\ref{opt_cond}) becomes
\begin{equation}\label{opt_cond2}
            q_{k,n}=\frac{\sum_{j\in \Omega_k^d}f_jp_{j,n}}{\sum_{i=1}^{N}\sum_{j\in \Omega_k^d}f_jp_{j,i}}.
\end{equation}
Again, Eq. (\ref{opt_cond2}) implies that the best codebook design should take into account the discrete user preferences.
Once the optimal clustering scheme is obtained, the optimal codebooks can be derived from Eq. (\ref{opt_cond2}). To solve problem (\ref{op_2}) is to find the optimal clustering scheme. However, there are around $K^J$ clustering schemes. It is computationally prohibitive to traverse all the possible ones. To give a suboptimal clustering scheme within affordable space and time costs, we transform problem (\ref{op_2}) into a DC (difference of convex functions) programming problem and present a DCA based method to give an appropriate clustering scheme.  The core idea behind the construction of the DC programming is  probabilistic clustering  in which the codebook used to encode each content item is randomly selected.

Lemma \ref{theo2} presents an equivalent problem for problem (\ref{op_2}) in which the variables $\Omega_k$ are removed.
\begin{lemma}\label{theo2}
    Problem (\ref{op_2}) is equivalent to the following problem:
    \begin{equation}\label{op_4}
    \begin{split}
    \min_{r_{j,k}}\quad  &\sum_{k=1}^{K}\sum_{j=1}^{J}\sum_{n=1}^{N}f_jp_{j,n}r_{j,k}\log \frac{\sum_{j=1}^J\sum_{i=1}^{N}f_jp_{j,i}r_{j,k}} {\sum_{j=1}^Jf_jp_{j,n}r_{j,k}}\\
    \text{s.t.}\quad & \sum_{k=1}^{K}r_{j,k}=1,j\in [J], \hspace{2cm} (\ref{op_4}.a)\\
    &r_{j,k}\ge 0, k=\in[K],j\in[J]. \hspace{1cm} (\ref{op_4}.b)
    \end{split}
    \end{equation}    
\end{lemma}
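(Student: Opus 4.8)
The plan is to read problem~(\ref{op_4}) as the result of two transformations applied to problem~(\ref{op_2}): first replacing the hard cluster sets $\Omega_k^d$ by ``soft'' membership weights $r_{j,k}\ge 0$ with $\sum_{k}r_{j,k}=1$ (so that $r_{j,k}$ is the probability that content item $j$ is encoded with $\bm{q}_k$, i.e.\ the probabilistic-clustering viewpoint), and then eliminating the codebooks by optimizing them out in closed form. The first thing I would prove is that problem~(\ref{op_2}) has the same optimal value, and essentially the same minimizers, as the joint problem
\[
\min_{r,\bm{q}}\;\sum_{k=1}^{K}\sum_{j=1}^{J} f_j\, r_{j,k}\, D(\bm{p}_j||\bm{q}_k),
\]
taken over all $r_{j,k}\ge 0$ with $\sum_{k}r_{j,k}=1$ for every $j$ and all $\bm{q}_k$ obeying Kraft's inequality~(\ref{kraft}) and $q_{k,n}\ge 0$. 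Indeed, for fixed $\bm{q}$ this objective is linear in each row $(r_{j,1},\dots,r_{j,K})$, so its minimum over the product of simplices is attained by placing all the mass of row $j$ on an index achieving $\min_{i}D(\bm{p}_j||\bm{q}_i)$; this is exactly the assignment rule~(\ref{clustering}), and it yields the objective value $\sum_{k}\sum_{j\in\Omega_k^d}f_j D(\bm{p}_j||\bm{q}_k)$ of~(\ref{op_2}). Hence the soft relaxation does not change the optimum (swap the two minimizations over $r$ and $\bm{q}$), and from any optimal pair one can round $r$ against the accompanying $\bm{q}$ to an integral assignment at no increase in cost.

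Next I would eliminate $\bm{q}$ from the joint problem. Its objective separates over $k$, and for fixed $r$ the $k$-th subproblem, $\min_{\bm{q}_k}\sum_{j} f_j r_{j,k} D(\bm{p}_j||\bm{q}_k)$ subject to~(\ref{kraft}) and $q_{k,n}\ge 0$, is precisely the $K=1$ problem solved by Theorem~\ref{theo1}, applied to the points $\{\bm{p}_j\}$ with $\bm{p}_j$ carrying the weight $f_j r_{j,k}$. Its solution is
\[
q_{k,n}^{\star}(r)=\frac{\sum_{j=1}^{J} f_j p_{j,n} r_{j,k}}{\sum_{i=1}^{N}\sum_{j=1}^{J} f_j p_{j,i} r_{j,k}},
\]
the natural soft-membership analogue of Eq.~(\ref{opt_cond2}), with Kraft's inequality tight at the optimum so that the relaxation to ``$\le$'' is harmless; terms attached to a cluster of total weight zero (vanishing denominator) are set to zero under the usual convention $0\log 0=0$ and contribute nothing.

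The last step is to substitute $q_{k,n}^{\star}(r)$ back. Writing $m_{k,n}(r)=\sum_{j} f_j p_{j,n} r_{j,k}$ and using $\sum_{n} p_{j,n}=1$ and $\sum_{k}r_{j,k}=1$, a short computation gives
\[
\sum_{k}\sum_{j} f_j r_{j,k}\, D(\bm{p}_j||\bm{q}_k^{\star}(r))=\sum_{k}\sum_{n} m_{k,n}(r)\log\frac{\sum_{i} m_{k,i}(r)}{m_{k,n}(r)}-\sum_{j} f_j H(\bm{p}_j),
\]
and the first term on the right is exactly the objective of~(\ref{op_4}) — since the logarithmic factor is independent of the summation index $j$, one has $\sum_{j} f_j p_{j,n} r_{j,k}\log(\cdot)=m_{k,n}(r)\log(\cdot)$ — while $\sum_{j} f_j H(\bm{p}_j)$ is a constant, namely the discrete form of $\int_{\bm{p}\in\Omega} f(\bm{p})H(\bm{p})\,d\bm{p}$, i.e.\ the term already dropped in passing from~(\ref{trans_cost}) to~(\ref{re_trans_cost}). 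Putting the three steps together: minimizing~(\ref{op_4}) over $r$ is the same as solving~(\ref{op_2}) — the optimal values differ only by the fixed constant $\sum_{j} f_j H(\bm{p}_j)$, an optimal $r$ of~(\ref{op_4}) yields an optimal codebook family of~(\ref{op_2}) via $q_{k,n}=q_{k,n}^{\star}(r)$ together with the clusters~(\ref{clustering}) (equivalently, the support of a rounded $r$), and conversely an optimal solution of~(\ref{op_2}) gives an optimal $r$ by taking the corresponding hard assignment.

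I expect the main obstacle to be making the first step watertight: justifying the exchange of the two minimizations and the claim that an optimal fractional $r$ can always be replaced by an integral one with the same cost, plus the careful handling of Kraft-tightness and of clusters that receive no weight. By contrast, the codebook-elimination step is an immediate corollary of Theorem~\ref{theo1}, and the concluding substitution is a routine calculation whose only subtlety is the $0\log 0$ convention.
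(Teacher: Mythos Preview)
Your proposal is correct and follows essentially the same route as the paper: introduce the probabilistic-clustering relaxation (the paper's problem~(\ref{op_3})), optimize out the codebooks via the first-order condition, and substitute back to obtain~(\ref{op_4}). Two minor points of comparison: (i) you invoke Theorem~\ref{theo1} directly to eliminate $\bm{q}_k$, whereas the paper redoes the Lagrange computation inside the proof; (ii) your linearity-in-$r$ argument for ``soft $=$ hard'' is the same idea the paper carries out in its Appendix by iterative rounding. You are also explicit about the additive constant $\sum_j f_j H(\bm{p}_j)$ that appears upon substitution, which the paper suppresses when it writes ``substituting \dots\ yields problem~(\ref{op_4})''; since this constant is independent of $r$, it is irrelevant to the minimizers, so the equivalence claim is unaffected.
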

\begin{proof}
    The key to prove the equivalence between two optimization problems is to show that the optimal solution of one can be easily derived from the optimal solution of the other and vice-versa. We first show the transformation from the optimal solution of problem (\ref{op_2}) to that of problem (\ref{op_4}).
    Let us consider a probabilistic clustering scheme. For the content item with SPV $\bm{p}_j$, let $r_{j,k}$ be the probability that codebook $q_k$ is selected to encode it. Then problem (\ref{op_2}) can be rewritten as  
    \begin{equation}\label{op_3}
    \begin{split}
    \min_{\bm{q}_k,r_{j,k}}\quad  &\sum_{j=1}^{J}\sum_{k=1}^{K}f_jr_{j,k}D(\bm{p}_j||\bm{q}_k)\\
    \text{s.t.}\quad & \bm{1}^T\bm{q}_k\le 1, k\in[K],\hspace{4.43cm} (\ref{op_3}.a)\\
    & \sum_{k=1}^{K}r_{j,k}=1,j\in[J], \hspace{4.14cm} (\ref{op_3}.b)\\
    & q_{k,n},r_{j,k}\ge 0, k\in[K],n\in[N],j\in[J],\hspace{1cm} (\ref{op_3}.c)
    \end{split}
    \end{equation}
    where the constraint Eq. ($\ref{op_3}.b$) corresponds to the probability normalization. 
    
    The method of Lagrange multipliers can be used to simplify problem (\ref{op_3}). Let us consider the Lagrange function 
    \begin{equation}\label{lagrangian_2}
    L(\bm{q}_k,r_{j,k},\lambda_k,\mu_j)=\sum_{j=1}^{J}\sum_{k=1}^{K}f_jr_{j,k}D(\bm{p}_j||\bm{q}_k)+ \sum_{k=1}^{K}\lambda_k(\bm{1}^T\bm{q}_k-1) + \sum_{j=1}^{J}\mu_j \left(  \sum_{k=1}^{K}r_{j,k}-1\right).
    \end{equation}
    The partial derivative of $L(\bm{q}_k,r_{j,k},\lambda_k,\mu_j)$ with respect to $q_{k,n}$ is given by 
    \begin{equation}
    \begin{split}
    \frac{\partial L(\bm{q}_k,r_{j,k},\lambda_k,\mu_j)}{\partial q_{k,n}} & =-\frac{1}{q_{k,n}}\sum_{j=1}^{J}f_jp_{j,n}r_{j,k} + \lambda_k.
    \end{split}
    \end{equation}
    Again, we have 
    \begin{equation}\label{opt_cond3}
    q_{k,n}=\frac{\sum_{j=1}^Jf_jp_{j,n}r_{j,k}}{\sum_{j=1}^J\sum_{i=1}^{N}f_jp_{j,i}r_{j,k}}.
    \end{equation}  
    according to the optimality conditions $\frac{\partial L(\bm{q}_k,r_{j,k},\lambda_k,\mu_j)}{\partial q_{k,n}} =0$ and $\sum_{n=1}^{N}q_{k,n}=1$.  It can be seen that Eqs. (\ref{opt_cond3}) and (\ref{opt_cond2}) are very similar. If we impose each $r_{j,k}$ equal to 0 or 1, Eq. (\ref{opt_cond3}) reduces to Eq.  (\ref{opt_cond2}). That is because the deterministic clustering  is a special case of probabilistic clustering. 
    Substituting Eq. (\ref{opt_cond3}) into problem (\ref{op_3}) yields problem (\ref{op_4}). 
    
    Suppose $\{\bm{q}_k^*:k\in [K]\}$ is the optimal solution of problem (\ref{op_2}). We set 
    \begin{equation}
    r_{j,k}^{*} = \begin{cases}
    1, & \text{if } k = \arg\min_{k_0\in [K]} D(\bm{p}_j||\bm{q}_{k_0}^*),\\
    0, &  \text{if } k \neq  \arg\min_{k_0\in [K]} D(\bm{p}_j||\bm{q}_{k_0}^*).
    \end{cases}
    \end{equation}
    Then $\{\bm{q}_k^*\}$ and $\{r_{j,k}^*\}$ form the optimal solution of problem (\ref{op_3}) and Eq. (\ref{opt_cond3}) holds for $\{\bm{q}_k^*\}$ and $\{r_{j,k}^*\}$. As a result,  $\{r_{j,k}^*\}$ is the optimal solution of problem (\ref{op_4}). The transformation from 
    from the optimal solution of problem (\ref{op_4}) to that of problem (\ref{op_2}) is presented in Appendix.
\end{proof}
Note that $r_{j,k}$ represents that the probability that the content item with SPV $\bm{p}_j$ is encoded by codebook $\bm{q}_k$. Lemma \ref{theo2} indicates the best probabilistic clustering is exactly a deterministic clustering. Although the constraints of problem  (\ref{op_4}) are linear, problem (\ref{op_4}) is still intractable due to the nonconvex objective function. Notice the objective function of problem (\ref{op_4}) contains the logarithms of fractions. We can transform the original problem (\ref{op_2}) into a DC programming problem and apply DCA to solve it.

\begin{theorem}\label{theo2.1}
    Problem (\ref{op_2}) is equivalent to a DC programming problem having the following form:
    \begin{equation}\label{op_5}
    \begin{split}
    \min_{\bm{x}}\quad  &\sum_{m=1}^{M}\lambda_{1,m} x_m \log x_m - \sum_{m=1}^{M}\lambda_{2,m} x_m \log x_m\\
    \text{s.t.}\quad & \bm{A}_1\bm{x} = \bm{b}_1,\hspace{2.232cm} (\ref{op_5}.a) \\
    &x_m\ge 0, m\in[M], \hspace{1cm} (\ref{op_5}.b)
    \end{split}
    \end{equation} 
    where $M=K+KN+KJ$,  $\lambda_{1,m}=1$ for $1\le m \le M$,  $\lambda_{1,m}=0$ for other values of $m$,  $\lambda_{2,m} = 1$ for $K+1\le m\le K+ KN$, and $\lambda_{2,m} = 0$ for other values of $m$.
\end{theorem}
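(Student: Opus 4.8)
The plan is to start from the probabilistic-clustering reformulation (\ref{op_4}) established in Lemma \ref{theo2}, \emph{lift} it to a higher-dimensional space by introducing auxiliary variables, and then observe that in the lifted space the objective splits into a difference of two separable convex functions while all constraints become linear. Concretely, for each $k\in[K]$ and $n\in[N]$ I would introduce $u_{k,n}=\sum_{j=1}^J f_j p_{j,n} r_{j,k}$, and for each $k\in[K]$ the variable $v_k=\sum_{n=1}^N u_{k,n}=\sum_{j=1}^J\sum_{i=1}^N f_j p_{j,i} r_{j,k}$. Stacking $\bm{x}=(v_1,\dots,v_K,\,u_{1,1},\dots,u_{K,N},\,r_{1,1},\dots,r_{J,K})$ gives exactly $M=K+KN+KJ$ coordinates.

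The defining relations $v_k-\sum_{n=1}^N u_{k,n}=0$ ($K$ equations) and $u_{k,n}-\sum_{j=1}^J f_j p_{j,n} r_{j,k}=0$ ($KN$ equations), together with the simplex constraints $\sum_{k=1}^K r_{j,k}=1$ from (\ref{op_4}.a) ($J$ equations), form a system of $K+KN+J$ linear equalities which I would collect as $\bm{A}_1\bm{x}=\bm{b}_1$ (so $\bm{b}_1$ has $K+KN$ leading zeros and $J$ trailing ones); the remaining requirements $r_{j,k}\ge 0$, and the automatically-induced $u_{k,n},v_k\ge 0$, are the box constraints $x_m\ge 0$. Note the Kraft constraints (\ref{op_2}.a) need not be carried along: by the Lagrangian analysis behind Lemma \ref{theo2} they hold with equality at optimum, and in the lifted variables the codebook is recovered as $q_{k,n}=u_{k,n}/v_k$ (equivalently Eq.\ (\ref{opt_cond3})), which automatically satisfies $\sum_n q_{k,n}=1$.

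The key algebraic step is to rewrite the objective of (\ref{op_4}). Summing the prefactor over $j$, $\sum_k\sum_j\sum_n f_j p_{j,n} r_{j,k}\log\frac{v_k}{u_{k,n}}=\sum_k\sum_n u_{k,n}\log v_k-\sum_k\sum_n u_{k,n}\log u_{k,n}$; since $\sum_n u_{k,n}=v_k$, the first double sum collapses to $\sum_k v_k\log v_k$, so the objective equals $\sum_k v_k\log v_k-\sum_{k,n}u_{k,n}\log u_{k,n}$, i.e.\ $\sum_{m=1}^M\lambda_{1,m}x_m\log x_m-\sum_{m=1}^M\lambda_{2,m}x_m\log x_m$ with $\lambda_{1,m}=1$ exactly on the $v$-block, $\lambda_{2,m}=1$ exactly on the $u$-block, and the convention $0\log 0=0$. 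Because $t\mapsto t\log t$ is convex on $[0,\infty)$, both sums are convex, so (\ref{op_5}) is a bona fide DC program with linear constraints (to which DCA then applies).

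Finally I would check the equivalence in both directions: every feasible $\{r_{j,k}\}$ of (\ref{op_4}) determines feasible $u,v$ and hence a feasible $\bm{x}$ of (\ref{op_5}) with the same objective value; conversely, for any $\bm{x}$ feasible for (\ref{op_5}) the linear equalities force the $u$- and $v$-blocks to equal the above expressions in the $r$-block, so the $r$-block is feasible for (\ref{op_4}) with the same objective, which Lemma \ref{theo2} then converts back into a solution of (\ref{op_2}). The one point needing care is the degenerate case of an empty cluster, $v_k=0$: then $u_{k,n}=0$ for all $n$, all terms involving codebook $k$ vanish on both sides under $0\log 0=0$, and the undefined ratio $u_{k,n}/v_k$ is irrelevant since codebook $k$ is unused, so the correspondence of objective values is unaffected. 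This bookkeeping around degenerate clusters, rather than any genuine difficulty, is the main thing to get right.
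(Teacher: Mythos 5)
Your proposal is correct and follows essentially the same route as the paper: introduce the auxiliary variables $u_{k,n}=\sum_j f_j p_{j,n} r_{j,k}$ and $v_k=\sum_n u_{k,n}$ (the paper's $s_{k,n}$ and $t_k$), stack them with the $r_{j,k}$ into $\bm{x}$, absorb the defining relations and the simplex constraints into $\bm{A}_1\bm{x}=\bm{b}_1$, and observe that the objective of problem (\ref{op_4}) collapses to $\sum_k v_k\log v_k-\sum_{k,n}u_{k,n}\log u_{k,n}$; you merely spell out the algebraic collapse, the two-way correspondence, and the $0\log 0$ bookkeeping that the paper's one-line proof leaves implicit. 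Note that your derivation supports $\lambda_{1,m}=1$ only on the first $K$ coordinates (the $v$-block), so the theorem's stated range ``$1\le m\le M$'' should read ``$1\le m\le K$'' (consistent with the two-user analogue, where $\mu_{1,l}=1$ for $l\le K_0+K_1+K_2$); your version is the intended one.
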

\begin{proof}
    According to Lemma \ref{theo2}, we only need to show the equivalence between problems (\ref{op_4}) and (\ref{op_5}).
    By introducing two groups of auxiliary variables 
    \begin{equation}\label{auxi1}
    s_{k,n}=\sum_{j=1}^Jf_jp_{j,n}r_{j,k},\hspace{0.9cm}
    \end{equation}  
    \begin{equation}\label{auxi2}
    t_k=\sum_{j=1}^J\sum_{i=1}^{N}f_jp_{j,i}r_{j,k},
    \end{equation}  
    and defining $\bm{x} = [t_1,..,t_K,s_{1,1},...,s_{K,N}, r_{1,1},...,r_{J,K}]$, problem (\ref{op_4}) can be rewritten (\ref{op_5}).  
\end{proof}

\begin{algorithm}[t]
    \caption{DCA based Codebook Design under Discrete User Preferences}
    \label{dca_clustering}
    \renewcommand{\algorithmicrequire}{ \textbf{Input:}} 
    \renewcommand{\algorithmicensure}{ \textbf{Output:}} 
    \begin{algorithmic}[1]
        \REQUIRE    $\bm{p}_1,...,\bm{p_J},f_1,...,f_J,\varepsilon$;        \\
        \ENSURE $\bm{q}_1,...,\bm{q}_K$;\\
        \STATE  Calculate $\bm{\lambda}_1,\bm{\lambda}_2,\bm{A}_1,\bm{b}_1$ according to $\bm{p}_1,...,\bm{p}_J$ and $f_1,...,f_J$;
        \STATE Solve problem (\ref{op_5}) according to Algorithm \ref{dca} and obtain $\bm{x}^{\text{opt}}$;
        \STATE Extract $r_{1,1},...,r_{J,K}$ from $\bm{x}^{\text{opt}}$;
        \FOR{$j =1$ to $J$}
        \STATE $k^{\max } = \arg \max_k r_{j,k}$;
        \STATE Set $r_{j,k^{\max} } = 1 $ and $r_{j,k} = 0 $ for $k\neq k^{\max }$;
        \ENDFOR
        \STATE Compute $\bm{q}_1,...,\bm{q}_K$ according to Eq. (\ref{opt_cond3}).
    \end{algorithmic}
\end{algorithm}

\begin{algorithm}[t]
    \caption{DCA for Problem (\ref{op_5})}
    \label{dca}
    \renewcommand{\algorithmicrequire}{ \textbf{Input:}} 
    \renewcommand{\algorithmicensure}{ \textbf{Output:}} 
    \begin{algorithmic}[1]
        \REQUIRE    $\bm{\lambda}_1,\bm{\lambda}_2,\bm{A}_1,\bm{b}_1,\varepsilon$;        \\
        \ENSURE $\bm{x}^{\text{opt}}$;\\
        \STATE Randomly  initialize $\bm{x}^{\text{old}}$ satisfying $\bm{A}_1\bm{x}^{\text{old}}=\bm{b}_1$;
        \STATE Calculate $R^{\text{new}}=\sum_{m=1}^{M}(\lambda_{1,m}-\lambda_{2,m}) x_m^{\text{old}} \log x_m^{\text{old}} $;
        \STATE Initialize $R^{\text{old}}=\inf$;
        \WHILE{$|R^{\text{old}}-R^{\text{new}}|> \varepsilon$}
        \STATE $R^{\text{old}}=R^{\text{new}}$;
        \FOR{$m=1$ to $M$}
        \STATE $y_{m}^{\text{old}} = \lambda_{2,m} (1 + \log (x_m^{\text{old}}))$;
        \ENDFOR
        \STATE Solve problem (\ref{op_6}) and obtain $\bm{x}^{\text{new}}$;
        \STATE Calculate $R^{\text{new}}=\sum_{m=1}^{M}(\lambda_{1,m}-\lambda_{2,m}) x_m^{\text{new}} \log x_m^{\text{new}} $;
        \STATE $x^{\text{old}}=x^{\text{new}}$;
        \ENDWHILE
        \STATE $\bm{x}^{\text{opt}}=\bm{x}^{\text{new}}$.
    \end{algorithmic}
\end{algorithm}

Note that $s_{k,n}$ is weighted average of SPVs. The weights are related to the probabilities in the probabilistic clustering scheme. In problem (\ref{op_5}), the linear equality constraint $\bm{A}_1\bm{x} = \bm{b}_1$ results from Eqs. ($\ref{op_4}.a$), (\ref{auxi1}), and (\ref{auxi2}). Denote $\bm{\lambda}_1=[\lambda_{1,1},...,\lambda_{1,M}]$ and $\bm{\lambda}_2=[\lambda_{2,1},...,\lambda_{2,M}]$, which are two 0-1 vectors. 
Algorithm \ref{dca_clustering} provides a codebook design based on DCA for problem (\ref{op_5}). Considering Lemma \ref{theo2} implies  that each $r_{j,k}$ is 0-1 in the optimal solution,  Algorithm \ref{dca_clustering} resets the values of $r_{j,k}$ in Steps 4-7 after obtaining $\bm{x}^{\text{opt}}$ from Algorithm \ref{dca}. Algorithm \ref{dca} provides a suboptimal solution for problem (\ref{op_5}) by solving the following convex problem iteratively:
\begin{equation}\label{op_6}
\begin{split}
\min_{\bm{x}}\quad  &\sum_{m=1}^{M}\lambda_{1,m} x_m \log x_m - \left( \bm{y}^{\text{old}} \right) ^T\bm{x}   \\
\text{s.t.}\quad & \bm{A}_1\bm{x} = \bm{b}_1, \hspace{2.18cm} (\ref{op_6}.a)\\
&x_m\ge 0, m\in[M].\hspace{1cm} (\ref{op_6}.b)
\end{split}
\end{equation}
In each iteration, $\bm{y}^{\text{old}}$ is calculated in Steps 6-8. It can be seen that $\bm{y}^{\text{old}}$ is the gradient of function $\sum_{m=1}^{M}\lambda_{2,m} x_m \log x_m$ at the point $\bm{x}^{\text{old}}$. Then $\left( \bm{y}^{\text{old}} \right) ^T\bm{x} + C$ is a local approximation of function $\sum_{m=1}^{M}\lambda_{2,m} x_m \log x_m$ ($C$ is a constant). This is the reason we use $\left( \bm{y}^{\text{old}} \right) ^T\bm{x} $ to replace $\sum_{m=1}^{M}\mu_m x_m \log x_m$ in problem (\ref{op_6}). It should be pointed out that it is easy to initialize  $\bm{x}^{\text{old}}$ satisfying $\bm{A}_1\bm{x}^{\text{old}}=\bm{b}_1$. We only need to initialize $r_{j,k}$ satisfying  Eqs. ($\ref{op_4}.a$) and ($\ref{op_4}.b$)  and then generate $s_{k,n}$ and $t_k$ according to Eqs. (\ref{auxi1}) and (\ref{auxi2}).

\subsection{$k$-means++ based Codebook Design under Discrete User Preferences}\label{k_means_single}

\begin{algorithm}[t]
    \caption{ $k$-means++ based Codebook Design under Discrete User Preferences}
    \label{algo1}
    \renewcommand{\algorithmicrequire}{ \textbf{Input:}} 
    \renewcommand{\algorithmicensure}{ \textbf{Output:}} 
    \begin{algorithmic}[1]
        \REQUIRE    $\bm{p}_1,...,\bm{p_J},f_1,...,f_J$;        \\
        \ENSURE $\bm{q}_1,...,\bm{q}_K$;\\
        \STATE  Randomly select a vector from $\Omega^d$ as $\bm{q}_1$ where $\bm{p}_j$ is selected with probability $f_j$;
        \FOR{$k=2$ to $K$}
        \STATE Compute $G_j=\min_{i<k}D(\bm{p}_j||\bm{q}_i)$ for $j=1,...,J$;
        \STATE Randomly select a vector from $\Omega^d$ as $\bm{q}_k$ where $\bm{p}_j$ is selected with probability proportional to $f_jG_j^2$;
        \ENDFOR
        \STATE Initialize $\Omega_k^d$ according to Eq. (\ref{clustering});
        \WHILE {Eq. (\ref{opt_cond2}) is FALSE for some $n,k$}
        \STATE  Update $\bm{q}_k$ according to Eq. (\ref{opt_cond2});
        \STATE Update $\Omega_k$ according to Eq. (\ref{clustering});
        \ENDWHILE
        \RETURN $\{\bm{q}_k:k=1,...,K\}.$
    \end{algorithmic}
\end{algorithm}

Recall that the edge source coding problem reduces to a clustering problem under discrete user preferences. Considering that $k$-means++ is a typical heuristic algorithm for clustering problems \cite{$k$-means},  we present a codebook design for edge source coding under discrete user preferences based on  the $k$-means++ approach in this subsection, as detailed in Algorithm 1. In Steps 1-5, each codebook is initialized according to the probabilities of the SPVs and the Kullback-Leibler divergence with  codebooks having been determined. In Steps 6-10, a variant of the  $k$-means approach is employed to update the clustering centers $\{\bm{q}_k:k\in[K]\}$. Compared with the traditional $k$-means++ approach, Algorithm \ref{algo1} uses the  Kullback-Leibler divergence instead of Euclidean distance to reassign the points into different clusters. In addition, the center of a cluster is derived from some conditional expectations in Algorithm \ref{algo1}, which is usually not the arithmetic mean of points in this cluster. 


\begin{algorithm}[t]
    \caption{$k$-means++ based Codebook Design under Continuous User Preferences}
    \label{algo2}
    \renewcommand{\algorithmicrequire}{ \textbf{Input:}} 
    \renewcommand{\algorithmicensure}{ \textbf{Output:}} 
    \begin{algorithmic}[1]
        \REQUIRE    $f,\varepsilon$;        \\
        \ENSURE $\bm{q}_1,...,\bm{q}_K$;\\
        \STATE  Randomly select a vector from $\Omega$ as $\bm{q}_1$ according to probability density function $f$;
        \FOR{$k=2$ to $K$}
        \STATE  $g(p)=\min_{i<k}D(\bm{p}||\bm{q}_i)$;
        \STATE Randomly select a vector from $\Omega$ as $\bm{q}_k$ according to $\frac{fg^2}{\int_{\bm{p}\in \Omega}fg^2d\bm{p}}$;
        \ENDFOR
        \STATE Initialize $\Omega_k$ according to Eq. (\ref{split_omega});
        \WHILE {$\min_{k,n}\left| q_{k,n} - \frac{\mathbb{E}\{p_n|\bm{p}\in \Omega_k\}}{\sum_{j=1}^{N}\mathbb{E}\{p_j|\bm{p}\in \Omega_k\}} \right| > \varepsilon $}
        \STATE  Update $\bm{q}_k$ according to Eq. (\ref{opt_cond});
        \STATE Update $\Omega_k$ according to Eq. (\ref{split_omega});
        \ENDWHILE
        \RETURN $\{\bm{q}_k:k\in [K]\}.$
    \end{algorithmic}
\end{algorithm}

\subsection{$k$-means++ based Codebook Design under Continuous User Preferences}\label{iter_algo}

In this subsection, we extend the $k$-means++ based algorithm proposed in the previous subsection to the continuous user preferences. Based on the coupling relationship between $\Omega_k$ and $\bm{q}_k$ revealed in Eqs. (\ref{split_omega}) and (\ref{opt_cond}), Algorithm \ref{algo2} presents an iterative method to obtain a suboptimal solution of problem (\ref{op_1}), which is a continuous version of Algorithm \ref{algo1}. In Algorithm \ref{algo2}, the codebooks are also initialized based on  user preferences. The parameter $\varepsilon$ in Step 7 is a sufficiently small positive number. Instead of summation, Algorithm \ref{algo2} calculates several integrals to update $\bm{q}_k$. The convergence of Algorithm \ref{algo2} is guaranteed by the fact that each update of $\bm{q}_k$ or $\Omega_k$ achieves a lower objective value  of problem (\ref{op_1}). Thus, Algorithm \ref{algo2} at least reaches a locally optimal point.

\begin{figure} 
    \centering
    \includegraphics[width=13cm]{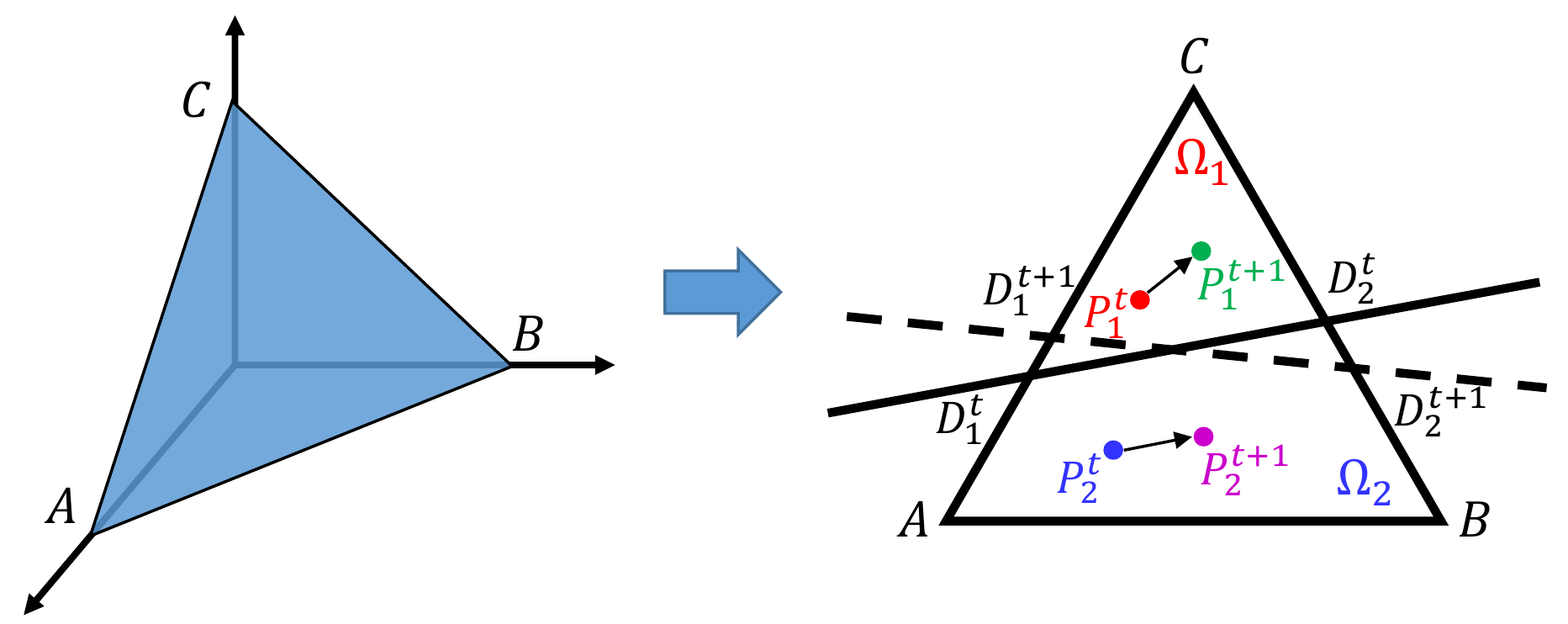}
    \caption{Illustration of the iterative process for $N=3$ and $K=2$.}    
    \label{n=3k=2}  
\end{figure}

We illustrate the iterative process in Algorithm \ref{algo2} by the case $N=3$, $K=2$, and the user preferences are uniform, i.e., $f=1/\int_{\bm{p}\in \Omega}d\bm{p}$. In this case, the set $\Omega$ is formed by points in an equilateral triangle with vertices $A=[1,0,0],B=[0,1,0],$ and $C=[0,0,1]$, as shown in Fig. \ref{n=3k=2}. The codebooks $\bm{q}_1$ and $\bm{q}_2$ are two three-dimensional vectors, corresponding to $P_1^t$ and $P_2^t$ respectively after the $t$-th iteration. Because there are only two codebooks, we need just a single hyperplane to split $\Omega$. The hyperplane is a line in the $N=3$ case, denoted as $D_1^tD_2^t$. Note that $D_1^tD_2^t$ is not the perpendicular bisector of $P_1^tP_2^t$ but instead is given by the equality $\sum_{n=1}^{3}p_n(\log q_{1,n} - \log q_{2,n}) =0$. In the $(t+1)$-th iteration, the points corresponding to the two codebooks are updated to  $P_1^{t+1}$ and $P_2^{t+1}$, which happen to be the centroids of the triangle $CD_1^tD_2^t$ and quadrilateral $AD_1^tD_2^tB$. When the user preferences are uniform, $P_1^{t+1}$ can be obtained by
\begin{eqnarray}
    P_1^{t+1} &=& \frac{C+D_1^t+D_2^t}{3}.
\end{eqnarray}
$P_2^{t+1}$ can also be obtained by a short calculation \cite{centroid}.
After obtaining $P_1^{t+1}$ and $P_2^{t+1}$, the new split line $D_1^{t+1}D_2^{t+1}$ can be calculated and further $\Omega_1$ and $\Omega_2$ are updated.

In the general case, namely, $f$ is an arbitrary probability density function and $N\ge 3$, it is intractable to calculate integrals over $\Omega_k$. This is because $\Omega_k$ is a convex polytope bounded by high dimensional hyperplanes. As a result, Algorithm \ref{algo2} is of high computational complexity for arbitrary $f$ and large $N$. More specifically,  the integrals over sets $\Omega_k$  induce the majority of the computational complexity. In the following subsection, we overcome this by a  sampling method.

\subsection{Sampling based Codebook Design under Continuous User Preferences}

\begin{algorithm}[t]
        \caption{Sampling based Codebook Design under Continuous User Preferences}
    \label{algo3}
    \renewcommand{\algorithmicrequire}{ \textbf{Input:}} 
    \renewcommand{\algorithmicensure}{ \textbf{Output:}} 
    \begin{algorithmic}[1]
        \REQUIRE    $f,S$;        \\
        \ENSURE $\bm{q}_1,...,\bm{q}_K$;\\
        \STATE  Sample $S$ points from $\Omega$ as $\bm{p}_1,...,\bm{p}_S$ according to probability density function $f$;
        \STATE Set $f_s = \frac{1}{S}$ for $s=1,...,S;$
        \STATE Run Algorithm \ref{algo1} or Algorithm $\ref{dca_clustering}$ with $\bm{p}_1,..,\bm{p}_S,f_1,...,f_S$ as input and obtain $\bm{q}_1,...,\bm{q}_K$;
        \RETURN $\{\bm{q}_k:k\in[K]\}.$
    \end{algorithmic}
\end{algorithm}

Algorithm \ref{algo3} presents a sampling based method to give a codebook design for edge source coding under continuous user preferences. The core idea behind Algorithm \ref{algo3} is that integrals over a set with a probability measure can be estimated by summations over sample points. In Algorithm \ref{algo3}, $S$ is the number of sample points. These points are sampled according to the function $f$. In Step 2, every sample point is associated with an identical probability $f_s=\frac{1}{S}$. Step 3 calls Algorithm \ref{algo1} or Algorithm \ref{dca_clustering} to gives a suboptimal codebook design.
 
 The number of sample points $S$ is a key parameter in Algorithm \ref{algo3}. On the one hand, too large $S$ will result in a high computing time. On the other hand, too small $S$ will reduce the estimation accuracy of the sampling method. There are two versions of Algorithm \ref{algo3} according to the algorithm called in Step 3. Simulations will demonstrate that these two versions of Algorithm \ref{algo3} have similar performance.  

 \section{Edge Source Coding for Two Users with Common Interests}\label{section6}
 In this section, we extend edge source coding to the two-user case, which will be referred to as user 1 and user 2. In contrast with the scenario considered in previous sections, two-user edge source coding is capable of reducing the total transmission cost by taking advantage of multicasting opportunities. The preferences of the two users will be described by a matrix. 
 
     Recall that $\Omega =\large \{[p_1,...,p_N]:\sum_{n=1}^{N}p_n=1\text{ and } p_n\ge 0 \text{ for } n = 1,...,N  \}$ represents the set of feasible SPVs. Let $\bm{p}[1]$ and $\bm{p}[2]$ denote the SPVs of the content items requested by user 1 and user 2, respectively. Then $\bm{p}[1]$ and $\bm{p}[2]$ are two random variables with support set   $\Omega \times \Omega$. We denote the probability density function of $\bm{p}[1]$ and $\bm{p}[2]$ as $f(\bm{p}[1],\bm{p}[2])$.  Let $K_1^t$ and $K_2^t$ be the total numbers of codebooks used by user 1 and user 2, respectively. As the two users probably request the same  content item, we assume there are $K_0$  codebooks that the two users have in common, denoted by $\bm{q}_1^0,...,\bm{q}_{K_0}^0$. In other words, user 1 and user 2 have $K_1=K_1^t-K_0$ and $K_2=K_2^t-K_0$ exclusive codebooks, respectively. Let $\{\bm{q}_1^1,...,\bm{q}_{K_1}^1\}$ and $\{\bm{q}_1^2,...,\bm{q}_{K_2}^2\}$ denote the sets of exclusive codebooks for user 1 and user 2, respectively.
 
 In the two-user edge source coding, multicasting opportunities can be utilized when the two users request the same content item. In this case, the requested content item will be encoded by a common codebook of the $K_0$ ones and then be transmitted to the two users simultaneously. If the two users request different content items, the BS has to compress the content item for each user separately. The total transmission cost to satisfy the user requests is given by 
 \begin{equation}\label{trans_cost_2}
 \begin{split}
 &\int_{\bm{p}[1] = \bm{p}[2] }\!f(\bm{p}[1],\bm{p}[2])\!\left(\! L\!\left(\!H(\bm{p}[1])\!+\!\min_{k\in [K_0]}\! D(\bm{p}[1]||\bm{q}_k^0)\!\right)\!\right)\!d\bm{p}[1]d\bm{p}[2] + \\
 & \int_{\bm{p}[1] \neq  \bm{p}[2]  }\!f(\bm{p}[1],\bm{p}[2])\!\left(\! L\!\left(\!H(\bm{p}[1])\!+\!\min\! \left\lbrace \min_{k\in [K_0]} \!D(\bm{p}[1]||\bm{q}_k^0), \min_{k\in [K_1]}\!D(\bm{p}[1]||\bm{q}_k^1) \!\right\rbrace \!\right)\!\right)\! d\bm{p}[1]d\bm{p}[2] +\\
  & \int_{\bm{p}[1] \neq  \bm{p}[2]  }\!f(\bm{p}[1],\bm{p}[2])\!\left(\! L\!\left(\!H(\bm{p}[2])\!+\!\min\! \left\lbrace \min_{k\in [K_0]} \!D(\bm{p}[2]||\bm{q}_k^0), \min_{k\in [K_2]}\!D(\bm{p}[2]||\bm{q}_k^2) \!\right\rbrace \!\right)\!\right)\! d\bm{p}[1]d\bm{p}[2].
 \end{split}
 \end{equation}
 The minimum operations in Eq. (\ref{trans_cost_2}) imply that each content item is encoded by the  codebook with the smallest Kullback-Leibler divergence. 
 The first term  in Eq. (\ref{trans_cost_2}) corresponds to the transmission cost when the two users' requests are identical and multicasting technique is used. The second and third terms correspond to the transmission costs of user 1 and user 2 when the two users' requests are different. 
 In this section, we aim to minimize Eq. (\ref{trans_cost_2}) by elaborately designing $K_0$ and codebooks  $\bm{q}_1^0,...,\bm{q}_{K_0}^0, \bm{q}_1^1,...,\bm{q}_{K_1}^1, \bm{q}_1^2,...,\bm{q}_{K_2}^2$. It should be noted that if the probability that the two users request the same content item is zero, i.e., $\text{Pr}\left( \bm{p}[1]=\bm{p}[2]\right) =0$, the first term in Eq. (\ref{trans_cost_2}) becomes 0.\footnote{$\text{Pr}(\cdot)$ denotes the probability of an event.}  Then multicasting opportunities cannot be created.  The optimal coding scheme should be setting $K_0=0$ and designing codebooks for the two users separately, which reduces to the problem considered in the previous sections. Thus, we pay attention to the situation that $\text{Pr}\left( \bm{p}[1]=\bm{p}[2]\right) \neq 0$ in this section.
 
 To make sure  $\text{Pr}\left( \bm{p}[1]=\bm{p}[2]\right) \neq 0$, $f(\cdot,\cdot)$ cannot be a continuous probability density function, because otherwise the Lebasgue measure of  the set $\{(\bm{p_1},\bm{p_2})\in \Omega \times \Omega :\bm{p_1}=\bm{p_2} )\}$ is 0. As discussed in Subsection \ref{dca_single}, we consider $f(\cdot,\cdot)$ to be a probability mass function. More specifically, user 1 and user 2 are interested only in $J$ different content items with SPVs $\bm{p}_1,...,\bm{p}_J\in \Omega^d$. The support of $f(\cdot,\cdot)$ reduces to $\{\bm{p}_1,...,
 \bm{p}_J\}\times \{\bm{p}_1,...,
 \bm{p}_J\}$, which allows us to describe the user preferences by a matrix $\bm{F}=(f_{i,j})_{J\times J}$ where $f_{i,j} = \text{Pr}(\bm{p}[1] = \bm{p}_i,\bm{p}[2] = \bm{p}_j)$.  The matrix $\bm{F}$ will be referred to as the joint preference matrix. The trace of $\bm{F}$ reflects the two users' common interests.  Hence, our task becomes minimizing Eq. (\ref{trans_cost_2}) given $\bm{F}$ and $K_1^t,K_2^t$. To this end, we formulate an  optimization problem  as follows:
\begin{equation}\label{op_7}
\begin{split}
\min_{\bm{q}_k^0,\bm{q}_k^1,\bm{q}_k^2}\quad  &\sum_{j=1}^{J}f_{j,j}\min_{k\in [K_0]} D(\bm{p}_j||\bm{q}_k^0)+\sum_{j=1}^{J}\sum_{i\neq j}f_{j,i}\min \left\lbrace \min_{k\in [K_0]} D(\bm{p}_j||\bm{q}_k^0), \min_{k\in [K_1]} D(\bm{p}_j||\bm{q}_k^1) \right\rbrace\\
& +\sum_{j=1}^{J}\sum_{i\neq j}f_{i,j}\min \left\lbrace \min_{k\in [K_0]} D(\bm{p}_j||\bm{q}_k^0), \min_{k\in [K_1]} D(\bm{p}_j||\bm{q}_k^2) \right\rbrace\\
\text{s.t.}\quad & \bm{1}^T\bm{q}_k^0\le 1,k \in[K_0],\hspace{3.63cm} (\ref{op_7}.a)\\
& \bm{1}^T\bm{q}_k^1\le 1, k\in[K_1],\hspace{3.63cm} (\ref{op_7}.b)\\
&\bm{1}^T\bm{q}_k^2\le 1, k\in[K_2],\hspace{3.63cm} (\ref{op_7}.c)\\
& K_0 + K _ 1 = K_1^t,K_0 + K_2 = K_2^t,\hspace{1.53cm} (\ref{op_7}.d)\\
& q_{k,n}^0, q_{k,n}^1, q_{k,n}^2\ge 0, K_0,K_1,K_2\in \mathbb{N},\hspace{1cm} (\ref{op_7}.e)\\
\end{split}
\end{equation} 
which is a mixed integer programming (MIP) with linear constraints.
The objective  of problem (\ref{op_7}) is derived from Eq. (\ref{trans_cost_2}) by removing a constant factor and a constant addend. Suffering from the nonconvex objective, problem (\ref{op_7}) is intractable. Two low-complexity algorithms are proposed to give   codebook designs. 

\subsection{DCA based Codebook Design  for Edge Source Coding with Two Users}

In this subsection, a DCA based algorithm is proposed to give a codebook design for edge source coding with two users. The major obstacle to solve problem (\ref{op_7}) results from the minimum operations, which impose that each content item is encoded by the codebook bringing the smallest transmission cost. Again, we consider a probabilistic clustering scheme, in which content items are encoded by randomly chosen codebooks. 

Let $r^0_{j,k}$ be the probability that codebook $\bm{q}_k^0$ is used when the two users request the content item with SPV $\bm{p}_j$ simultaneously. In the case only user 1 requests the content item with SPV $\bm{p}_j$, we let $r_{j,k}^{1}$ and $r_{j,k+K_0}^{1}$ be the probabilities that codebooks $\bm{q}_k^0$ and $\bm{q}_k^1$  are used, respectively. Similarly, we   define  $r_{j,k}^{2}$ and $r_{j,k+K_0}^{2}$. In the probability clustering scheme, problem (\ref{op_7}) becomes 
\begin{equation}\label{op_8}
\begin{split}
\min &\sum_{j=1}^{J}\sum_{k=1}^{K_0}f_{j,j}r^0_{j,k} D(\bm{p}_j||\bm{q}_k^0)+\sum_{j=1}^{J}\sum_{i\neq j}f_{j,i} \left( \sum_{k=1}^{K_0}r_{j,k}^{1} D(\bm{p}_j||\bm{q}_k^0)+\sum _{k=1}^{K_1}r^1_{j,k+K_0} D(\bm{p}_j||\bm{q}_k^1) \right)\\
& +\sum_{j=1}^{J}\sum_{i\neq j}f_{i,j} \left( \sum_{k=1}^{K_0}r_{j,k}^{c,2} D(\bm{p}_j||\bm{q}_k^0)+\sum _{k=1}^{K_2}r^2_{j,k+K_0} D(\bm{p}_j||\bm{q}_k^2) \right)\\
\text{s.t.}\,& \text{Constraints of Problem (\ref{op_7})},\\
& \sum_{k=1}^{K_0 }r_{j,k}^0 = 1, j \in[J],\hspace{2.3cm} (\ref{op_8}.a)\\
&  \sum_{k=1}^{K_1^t} r_{j,k}^{1}  = 1,  j \in[J],\hspace{2.3cm} (\ref{op_8}.b)\\
&  \sum_{k=1}^{K_2^t} r_{j,k}^{2}  = 1, j \in[J],\hspace{2.3cm} (\ref{op_8}.c)\\
& r_{j,k}^0,r_{j,k}^1,r_{j,k}^2\ge 0. \hspace{3cm} (\ref{op_8}.d)
\end{split}
\end{equation} 
Similar to the analysis in Subsection \ref{dca_single},
problem (\ref{op_8}) is equivalent to problem (\ref{op_7}). The probabilistic clustering method eliminates the minimum operations without any loss in the optimality. 

If we  fix $K_0$, problem (\ref{op_8}) becomes an optimization problem without integer variables. Then, the method of Lagrange multipliers can help us get greater insight on problem (\ref{op_8}). We have the theorem stated below. 
\begin{theorem}
    The optimal probabilistic clustering satisfies 
    \begin{eqnarray}
    \label{opt_cond_3}
    q_{k,n}^0&=&\frac{\sum_{j=1}^J\left(  r^0_{j,k}f_{j,j}+ r^{1}_{j,k}\sum_{i\neq j}f_{j,i} +r^{2}_{j,k}\sum_{i\neq j}f_{i,j}  \right) p_{j,n} }{\sum_{n_0=1}^N\sum_{j=1}^J\left(  r^0_{j,k}f_{j,j}+ r^{1}_{j,k}\sum_{i\neq j}f_{j,i} +r^{2}_{j,k}\sum_{i\neq j}f_{i,j}  \right) p_{j,n_0}},\\
    \label{opt_cond_4}
    q_{k,n}^1&=&\frac{\sum_{j=1}^J\left(  \sum_{i\neq j}f_{j,i}   \right) r^{1}_{j,k+K_0}p_{j,n} }{\sum_{n_0=1}^N\sum_{j=1}^J\left(  \sum_{i\neq j}f_{j,i}   \right) r^{1}_{j,k+K_0}p_{j,n_0}},\\
    \label{opt_cond_5}
    q_{k,n}^2&=&\frac{\sum_{j=1}^J\left(  \sum_{i\neq j}f_{i,j}   \right) r^{2}_{j,k+K_0}p_{j,n} }{\sum_{n_0=1}^N\sum_{j=1}^J\left(  \sum_{i\neq j}f_{i,j}   \right) r^{2}_{j,k+K_0}p_{j,n_0}}.
    \end{eqnarray}
\end{theorem}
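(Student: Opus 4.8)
The plan is to mimic exactly the Lagrange-multiplier argument used in the proof of Theorem~\ref{theo1} and Lemma~\ref{theo2}, now applied to problem (\ref{op_8}) with $K_0$ held fixed so that no integer variables remain. First I would write down the Lagrange function for problem (\ref{op_8}): attach multipliers $\lambda_k^0,\lambda_k^1,\lambda_k^2$ to the Kraft constraints $\bm{1}^T\bm{q}_k^0\le 1$, $\bm{1}^T\bm{q}_k^1\le 1$, $\bm{1}^T\bm{q}_k^2\le 1$ (treated as equalities at the optimum, since shortening any codeword strictly decreases a Kullback--Leibler term), and multipliers for the normalization constraints (\ref{op_8}.a)--(\ref{op_8}.c); the latter do not involve the $\bm{q}$'s and so play no role in the stationarity conditions for $q_{k,n}^0,q_{k,n}^1,q_{k,n}^2$.

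Next I would expand each $D(\bm{p}_j||\bm{q}_k^{\cdot}) = \sum_n p_{j,n}\log p_{j,n} - \sum_n p_{j,n}\log q_{k,n}^{\cdot}$ and differentiate with respect to a single component, say $q_{k,n}^0$. The only terms in the objective of (\ref{op_8}) containing $q_{k,n}^0$ are the three groups $f_{j,j}r^0_{j,k}D(\bm{p}_j||\bm{q}_k^0)$, $f_{j,i}r^1_{j,k}D(\bm{p}_j||\bm{q}_k^0)$, and $f_{i,j}r^2_{j,k}D(\bm{p}_j||\bm{q}_k^0)$ (the latter two summed over $i\neq j$), so
\begin{equation}
\frac{\partial L}{\partial q_{k,n}^0} = -\frac{1}{q_{k,n}^0}\sum_{j=1}^J\Bigl(r^0_{j,k}f_{j,j} + r^1_{j,k}\sum_{i\neq j}f_{j,i} + r^2_{j,k}\sum_{i\neq j}f_{i,j}\Bigr)p_{j,n} + \lambda_k^0.
\end{equation}
Setting this to zero gives $q_{k,n}^0$ proportional to the bracketed weighted sum of $p_{j,n}$; imposing $\sum_{n}q_{k,n}^0 = 1$ fixes the normalization and yields (\ref{opt_cond_3}). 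The analogous computation for $q_{k,n}^1$ picks up only the term $f_{j,i}r^1_{j,k+K_0}D(\bm{p}_j||\bm{q}_k^1)$ summed over $i\neq j$, giving (\ref{opt_cond_4}), and likewise the $f_{i,j}r^2_{j,k+K_0}$ term gives (\ref{opt_cond_5}). I would note that each objective term is separately convex in $\bm{q}_k^{\cdot}$ on the positive orthant (KL divergence is convex in its second argument) and the constraints are linear, so for fixed clustering probabilities the stationarity conditions are not merely necessary but characterize the minimizer; the full problem (\ref{op_8}) is jointly nonconvex, but the theorem only claims that the optimum \emph{satisfies} these equations, which is exactly the first-order condition.

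The main obstacle, and the one point deserving care, is bookkeeping: correctly identifying which $r$-variables multiply a given $\bm{q}$ and tracking the index shift $k\mapsto k+K_0$ that distinguishes ``user~1 uses a common codebook $\bm{q}_k^0$'' (indices $1,\dots,K_0$ in $r^1$) from ``user~1 uses an exclusive codebook $\bm{q}_k^1$'' (indices $K_0+1,\dots,K_1^t$ in $r^1$, written $r^1_{j,k+K_0}$). Once the correspondence between the terms of the objective and the components of each codebook is laid out cleanly, the differentiation is routine and identical in form to the single-user case. I would therefore structure the written proof as: (i) state that $K_0$ is fixed and the Kraft constraints are tight; (ii) form the Lagrangian; (iii) differentiate in $q_{k,n}^0$, collect the weights, and solve; (iv) remark that the computations for $q_{k,n}^1$ and $q_{k,n}^2$ are identical with the obvious change of which objective term contributes, yielding (\ref{opt_cond_4}) and (\ref{opt_cond_5}).
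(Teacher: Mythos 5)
Your proposal is correct and follows essentially the same route as the paper, which proves the theorem by taking partial derivatives of the Lagrange function of problem (\ref{op_8}) and normalizing; you simply spell out the stationarity computation and the index bookkeeping that the paper leaves implicit. No gaps.
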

\begin{proof}
    Taking partial derivatives of the Lagrange function of problem (\ref{op_8}) yields Eqs. (\ref{opt_cond_3})-(\ref{opt_cond_5}). 
\end{proof}

Eqs. (\ref{opt_cond_3})-(\ref{opt_cond_5}) enable us to remove the optimization variables $\bm{q}_k^0,\bm{q}_k^1,\bm{q}_k^2$ in problem (\ref{op_8}). As a consequence, problem (\ref{op_8}) becomes an optimization problem similar to problem (\ref{op_4}), which contains the logarithms of fractions in the objective. As discussed in Subsection \ref{dca_single}, we introduce several auxiliary variables to obtain a simplified equivalent problem of problem (\ref{op_8}):
\begin{eqnarray}
\label{auxi3}
s^0_{k,n}&=&\sum_{j=1}^J\left(  r^0_{j,k}f_{j,j}+ r^{1}_{j,k}\sum_{i\neq j}f_{j,i} +r^{2}_{j,k}\sum_{i\neq j}f_{i,j}  \right) p_{j,n},\\
t^0_k& =&{\sum_{n_0=1}^N\sum_{j=1}^J\left(  r^0_{j,k}f_{j,j}+ r^{1}_{j,k}\sum_{i\neq j}f_{j,i} +r^{2}_{j,k}\sum_{i\neq j}f_{i,j}  \right) p_{j,n_0}},\\
\label{auxi4}
s_{k,n}^1&=&\sum_{j=1}^J\left(  \sum_{i\neq j}f_{j,i}   \right) r^{1}_{j,k+K_0}p_{j,n}\\
t_{k}^1&=&{\sum_{n_0=1}^N\sum_{j=1}^J\left(  \sum_{i\neq j}f_{j,i}   \right) r^{1}_{j,k+K_0}p_{j,n_0}},\\
s^2_{k,n}&=&\sum_{j=1}^J\left(  \sum_{i\neq j}f_{i,j}   \right) r^{2}_{j,k+K_0}p_{j,n},\\
\label{auxi5}
t^2_{k}&=&{\sum_{n_0=1}^N\sum_{j=1}^J\left(  \sum_{i\neq j}f_{i,j}   \right) r^{2}_{j,k+K_0}p_{j,n_0}}.
\end{eqnarray}
Again, $s_{k,n}^0$ is weighted average of  SPVs,  and so do $s_{k,n}^1$ and $s_{k,n}^2$.  The variables $t_k^0,t_k^1,$ and $t_k^2$ are introduced to normalize $s_{k,n}^0, s_{k,n}^1,$ and $s_{k,n}^2$.

Define $\bm{z}=[t_1^0,...,t_{K_2}^2,s_{1,1}^0,...,s_{K_2,N}^2,r^0_{1,1},...,r^2_{J,K_2^t}]$, which is a vector consisting of  $L=((K_0 + K_1 + K_2)(N+1)+J(3 K_0+K_1+K_2))$ components. Then problem (\ref{op_8}) can be transformed into a DC programming problem as follows:
\begin{equation}\label{op_9}
\begin{split}
\min_{\bm{z}}\quad  &\sum_{l=1}^{L}\mu_{1,l} z_l \log z_l - \sum_{l=1}^{L}\mu_{2,l} z_l \log z_l    \\
\text{s.t.}\quad & \bm{A}_2\bm{z} = \bm{b}_2, \hspace{1.6cm}(\ref{op_9}.a)\\
&z_l\ge 0, l\in[L].\hspace{1cm} (\ref{op_9}.b)
\end{split}
\end{equation}
In problem (\ref{op_9}), $\bm{\mu}_1=[\mu_{1,1},...,\mu_{1,L}]$ is a vector satisfying $\mu_{1,l}=1$ for $l\le K_0 + K_1 +K_2$ and $\mu_{1,l}=0$ for other values of $l$. In addition, $\bm{\mu}_2=[\mu_{2,1},...,\mu_{2,L}]$ is a vector satisfying $\mu_{2,l}=1$ for $ K_0 + K_1 +K_2<l\le (K_0 + K_1 +K_2)(N+1)$ and $\mu_{2,l}=0$ for other values of $l$. The equality constraint $\bm{A}_2\bm{z} = \bm{b}_2$ comes from  Eqs. $(\ref{op_8}.a)$-$(\ref{op_8}.c)$ and Eqs. (\ref{auxi3})-(\ref{auxi5}). One can see that problems (\ref{op_9}) and (\ref{op_6}) are almost identical apart from the difference in dimension. Thus, Algorithm \ref{dca} can also be used to solve problem (\ref{op_9}).

\begin{algorithm}[t]
    \caption{DCA based Codebook Design for Edge Source Coding with Two Users}
    \label{algo7}
    \renewcommand{\algorithmicrequire}{ \textbf{Input:}} 
    \renewcommand{\algorithmicensure}{ \textbf{Output:}} 
    \begin{algorithmic}[1]
        \REQUIRE    $\bm{p}_1,...,\bm{p_J},\bm{F},K_1^t,K_2^t,\varepsilon$;        \\
        \ENSURE $K_0,\bm{q}_1^0,...,\bm{q}_{K_0}^0,\bm{q}_1^1,...,\bm{q}_{K_1}^1,\bm{q}_1^2,...,\bm{q}_{K_2}^2$;\\
        \FOR{$K_0=1$ to $\min\{K_1^t,K_2^t\}$ }
        \STATE $K_1= K_1^t-K_0, K_2= K_2^t-K_0$;
        \STATE  Calculate $\bm{\mu_1,\mu_2,A_2,b_2}$ according to $\bm{p}_1,...,\bm{p}_J$ and $f_1,...,f_J$;
        \STATE Solve problem (\ref{op_9}) according to Algorithm \ref{dca} and obtain $\bm{z}^{\text{opt}}$;
        \STATE Extract $r^0_{1,1},...,r^2_{J,K_2^t}$ from $\bm{z}^{\text{opt}}$;
        \FOR{$j =1$ to $J$}
        \STATE Find $k_{\max }^* = \arg \max_{k\in [K_*]} r_{j,k}^*$ for $*\in \{0,1,2\}$;
        \STATE Set $r_{j,k_{\max}^* }^* = 1 $ and $r_{j,k}^* = 0 $ for $k\neq k_{\max }^*$ and $*\in \{0,1,2\}$;
        \ENDFOR
        \STATE Compute $\bm{q}_1^0,...,\bm{q}_{K_0}^0,\bm{q}_1^1,...,\bm{q}_{K_1}^1,\bm{q}_1^2,...,\bm{q}_{K_2}^2$ according to Eqs. (\ref{opt_cond_3})-(\ref{opt_cond_5});
        \STATE Compute the objective of problem (\ref{op_7}) $R_{K_0}$ for given $\bm{q}_1^0,...,\bm{q}_{K_0}^0,\bm{q}_1^1,...,\bm{q}_{K_1}^1,\bm{q}_1^2,...,\bm{q}_{K_2}^2$;
        \ENDFOR
        \STATE Return $K_0^* = \arg\min R_{K_0}$ and the corresponding $\bm{q}_1^0,...,\bm{q}_{K_0^*}^0,\bm{q}_1^1,...,\bm{q}_{K_1^*}^1,\bm{q}_1^2,...,\bm{q}_{K_2^*}^2$;
    \end{algorithmic}
\end{algorithm}  

Algorithm \ref{algo7} gives a suboptimal codebook design for the two-user edge source coding. In Algorithm \ref{algo7}, we traverse all possible values of $K_0$. For each $K_0$, the optimal probabilistic clustering scheme is obtained by calling Algorithm \ref{dca}. Afterwards, the probabilities $r^0_{1,1},...,r^2_{J,K_2^t}$ are normalized to be 0-1 and then the optimal codebook designs for this $K_0$ are derived.  The optimal $K_0$ are determined by comparing the transmission costs $R_{K_0}$. 

Before running Algorithm \ref{algo7}, we can have some insights on the results. Qualitatively, the more similar the two users' preferences, the smaller the transmission cost. Note that the user preferences are described by matrix $\bm{F}$. The trace of $\bm{F}$ measures how similar the two users' preferences are. Thus, the higher the trace of $\bm{F}$, the smaller the transmission cost, which will be demonstrated in simulations later.

\subsection{$k$-means++ based Codebook Design for Edge Source Coding with Two Users}

\begin{algorithm}[t]
    \caption{$k$-means++ based Codebook Design for Edge Source Coding with Two Users}
    \label{algo6}
    \renewcommand{\algorithmicrequire}{ \textbf{Input:}} 
    \renewcommand{\algorithmicensure}{ \textbf{Output:}} 
    \begin{algorithmic}[1]
        \REQUIRE    $\bm{p}_1,...,\bm{p_J},\bm{F}$;        \\
        \ENSURE $K_0,\bm{q}_1^0,...\bm{q}_{K_0}^0,\bm{q}_1^1,...\bm{q}_{K_1}^1,\bm{q}_1^2,...\bm{q}_{K_2}^2$;\\
        \FOR{$K_0=1$ to $\min\{K_1^t,K_2^t\}$}
        \STATE Let $K_1 = K_1^t-K_0$ and $K_2 = K_2^t-K_0$;
        \STATE  Randomly select a vector from $\Omega^d$ as $\bm{q}_1^0$ where $\bm{p}_j$ is selected with probability proportional to $\sum_{i=1}^{J}f_{i,j}+\sum_{i=1}^{J}f_{j,i}-f_{j,j} $;
        \FOR{$k=2$ to $K_0$}
        \STATE Compute $G_j=\min_{i<k}D(\bm{p}_j||\bm{q}_i^0)$ for $j=1,...,J$;
        \STATE Randomly select a vector from $\Omega^d$ as $\bm{q}_k$ where $\bm{p}_j$ is selected with probability proportional to $(\sum_{i=1}^{J}f_{i,j}+\sum_{i=1}^{J}f_{j,i}-f_{j,j})G_j^2$;
        \ENDFOR
        \STATE Initialize $\bm{q}_k^1$ and $\bm{q}_k^2$ based on $\Omega^d$ accordingly;
        \STATE Compute $\Omega^0_k,\Omega^1_k,\Omega^2_k$ according to Eqs. (\ref{subset0})-(\ref{subset3});
        \WHILE {at least one of Eqs. (\ref{coup1})-(\ref{coup3}) is FALSE for some $n,k$}
        \STATE  Update $\bm{q}_k^0,\bm{q}_k^1,\bm{q}_k^2$ according to Eqs. (\ref{coup1})-(\ref{coup3});
        \STATE Update $\Omega^0_k,\Omega^1_k,\Omega^2_k$ according to Eqs. (\ref{subset0})-(\ref{subset3});
        \ENDWHILE
        \STATE Compute the objective of problem (\ref{op_7}) $R_{K_0}$ for given $\bm{q}_1^0,...,\bm{q}_{K_0}^0,\bm{q}_1^1,...,\bm{q}_{K_1}^1,\bm{q}_1^2,...,\bm{q}_{K_2}^2$;
        \ENDFOR
        \STATE Return $K_0^* = \arg\min R_{K_0}$ and the corresponding $\bm{q}_1^0,...,\bm{q}_{K_0^*}^0,\bm{q}_1^1,...,\bm{q}_{K_1^*}^1,\bm{q}_1^2,...,\bm{q}_{K_2^*}^2$;
    \end{algorithmic}
\end{algorithm}

In this subsection, we present a variant of the $k$-means++ approach to give a suboptimal codebook design. As discussed in Section \ref{single_user_discrete}, problem (\ref{op_7}) is viewed as a clustering problem and the best clustering scheme is obtained by iteration. 

For fixed $K_0$, let us define 
   \begin{eqnarray}
\label{subset0}
    \Omega_k^0&=&\{j:D(\bm{p}_j||\bm{q}_k^0) = \min_{i\in [K_0]} D(\bm{p}_j||\bm{q}_i^0)  \}, \text{ for } k \in [K_0],\\
    \Omega_k^1&=&\left\{j:D(\bm{p}_j||\bm{q}_k^0) =\min \left\lbrace \min_{i\in [K_0]} D(\bm{p}_i||\bm{q}_i^0), \min_{i\in [K_1]} D(\bm{p}_i||\bm{q}_i^1) \right\rbrace \right\},  \text{ for } k \in [K_0],\\
    \Omega_{k+K_0}^1&=&\left\{j:D(\bm{p}_j||\bm{q}_k^1) =\min \left\lbrace \min_{i\in [K_0]} D(\bm{p}_i||\bm{q}_i^0), \min_{i\in [K_1]} D(\bm{p}_i||\bm{q}_i^1) \right\rbrace \right\},  \text{ for } k \in [K_1],\\  
        \Omega_k^2&=&\left\{j:D(\bm{p}_j||\bm{q}_k^0) =\min \left\lbrace \min_{i\in [K_0]} D(\bm{p}_i||\bm{q}_i^0), \min_{i\in [K_2]} D(\bm{p}_i||\bm{q}_i^2) \right\rbrace \right\},  \text{ for } k \in [K_0],\\
        \label{subset3}
    \Omega_{k+K_0}^2&=&\left\{j:D(\bm{p}_j||\bm{q}_k^2) =\min \left\lbrace \min_{i\in [K_0]} D(\bm{p}_i||\bm{q}_i^0), \min_{i\in [K_2]} D(\bm{p}_i||\bm{q}_i^2) \right\rbrace \right\},  \text{ for } k \in [K_2].
\end{eqnarray}
It can be seen that $\Omega_k^0$ are disjoint sets and $\cup_{k}\Omega_k^0=\Omega^d$, and so do $\Omega_k^1$ and $\Omega_k^2$ . When the two users' requests are identical, $q_k^0$ is used to encode content items in $\Omega^0_k$. When the two users' requests are different, $q_k^0$ is used to encode the content item requested by user 1 and in $\Omega^0_k$. In addition, $\bm{q}_k^1$ is used to encode the content item requested by user 1 and in $\Omega^0_{k+K_0}$. The sets $\Omega^2_k$ and $\Omega^2_{k+K_0}$ have similar meanings. 
The objective of problem (\ref{op_7}) can be rewritten as 
\begin{equation}\label{trans_cost3}
\begin{split}
   R(\bm{q}_k^0,\bm{q}_k^1,\bm{q}_k^2,\Omega^0_k,\Omega^1_k,\Omega^2_k)=&\sum_{k=1}^{K_0}\left( \sum_{j\in \Omega_k^0}f_{j,j} D(\bm{p}_j||\bm{q}_k^0) +\sum_{j\in \Omega_k^1}\left( \sum_{i\neq j}f_{j,i}\right)  D(\bm{p}_j||\bm{q}_k^0)\right.\\&\left.+\sum_{j\in \Omega_k^2}\left( \sum_{i\neq j}f_{i,j}\right)  D(\bm{p}_j||\bm{q}_k^0)\right) + \sum_{k=1}^{K_1}\sum_{j\in \Omega_{k+K_0}^1}\left( \sum_{i\neq j}f_{j,i}\right)  D(\bm{p}_j||\bm{q}_k^1)\\
   &+ \sum_{k=1}^{K_2}\sum_{j\in \Omega_{k+K_0}^2}\left( \sum_{i\neq j}f_{i,j}\right)  D(\bm{p}_j||\bm{q}_k^2).
\end{split}
\end{equation}
It should be noted that the sets $\Omega^0_k,\Omega^1_k,$ and $\Omega^2_k$ rely on the values of $\bm{q}^0_k,\bm{q}^1_k, $ and $\bm{q}^2_k$.

 Eq. (\ref{trans_cost3}) enables us to improve the codebook design $\bm{q}^0_k,\bm{q}^1_k, \bm{q}^2_k$ through $\Omega^0_k,\Omega^1_k,\Omega^2_k$. 
Taking partial derivatives of Eq. (\ref{trans_cost3}) with respect to $\bm{q}^0_k,\bm{q}^1_k, \bm{q}^2_k$ yields
\begin{eqnarray}
\label{coup1}
    q_{k,n}^0 &=& \frac{ \sum_{j\in \Omega_k^0}f_{j,j} p_{j,n} \!+\!\sum_{j\in \Omega_k^1}\left( \sum_{i\neq j}f_{j,i}\right)  p_{j,n}\!+\!\sum_{j\in \Omega_k^2}\left( \sum_{i\neq j}f_{i,j}\right)  p_{j,n}}{\sum_{n_0=1}^{N}\left( \sum_{j\in \Omega_k^0}f_{j,j} p_{j,n_0} \!+\!\sum_{j\in \Omega_k^1}\left( \sum_{i\neq j}f_{j,i}\right)  p_{j,n_0}\!+\!\sum_{j\in \Omega_k^2}\left( \sum_{i\neq j}f_{i,j}\right)  p_{j,n_0}\right) },\\
    \label{coup2}
        q_{k,n}^1 &=& \frac{ \sum_{j\in \Omega_{k+K_0}^1}\left( \sum_{i\neq j}f_{j,i}\right)  p_{j,n}}{\sum_{n_0=1}^{N}\sum_{j\in \Omega_{k+K_0}^1}\left( \sum_{i\neq j}f_{j,i}\right)  p_{j,n_0}},\\
        \label{coup3}
    q_{k,n}^2 &=& \frac{ \sum_{j\in \Omega_{k+K_0}^2}\left( \sum_{i\neq j}f_{i,j}\right)  p_{j,n}}{\sum_{n_0=1}^{N}\sum_{j\in \Omega_{k+K_0}^2}\left( \sum_{i\neq j}f_{i,ji}\right)  p_{j,n_0}}.
\end{eqnarray}
Eqs. (\ref{coup1})-(\ref{coup3}) minimize $R(\bm{q}_k^0,\bm{q}_k^1,\bm{q}_k^2,\Omega^0_k,\Omega^1_k,\Omega^2_k)$ for fixed $\Omega^0_k,\Omega^1_k,\Omega^2_k$ and further reveal  the coupling between  $\Omega^0_k,\Omega^1_k,\Omega^2_k$ and $\bm{q}^0_k,\bm{q}^1_k, \bm{q}^2_k$. Combining Eqs. (\ref{subset0})-(\ref{subset3}) and (\ref{coup1})-(\ref{coup3}), we propose a $k$-means++ based algorithm to give a heuristic codebook design in Algorithm \ref{algo6}.

Algorithm \ref{algo6} traverses  all the possible values of $K_0$ and compares the transmission costs they bring. For each value of $K_0$, Algorithm \ref{algo6} achieves a locally optimal solution by improving the randomly initialized one iteratively. The fact that each iteration reduces the transmission cost ensures the convergence of Algorithm \ref{algo6}.

 \section{Simulation Results}\label{section7}

 In this section, we present numerical results to demonstrate the performance of the proposed algorithms and the potential of edge source coding. Throughput this section, we assume $L=20$, i.e., each content item consists of 20 symbols. We always set $\varepsilon=10^{-6}$ in the simulations.
 
 \begin{figure}
     \centering
     \includegraphics[width=8cm]{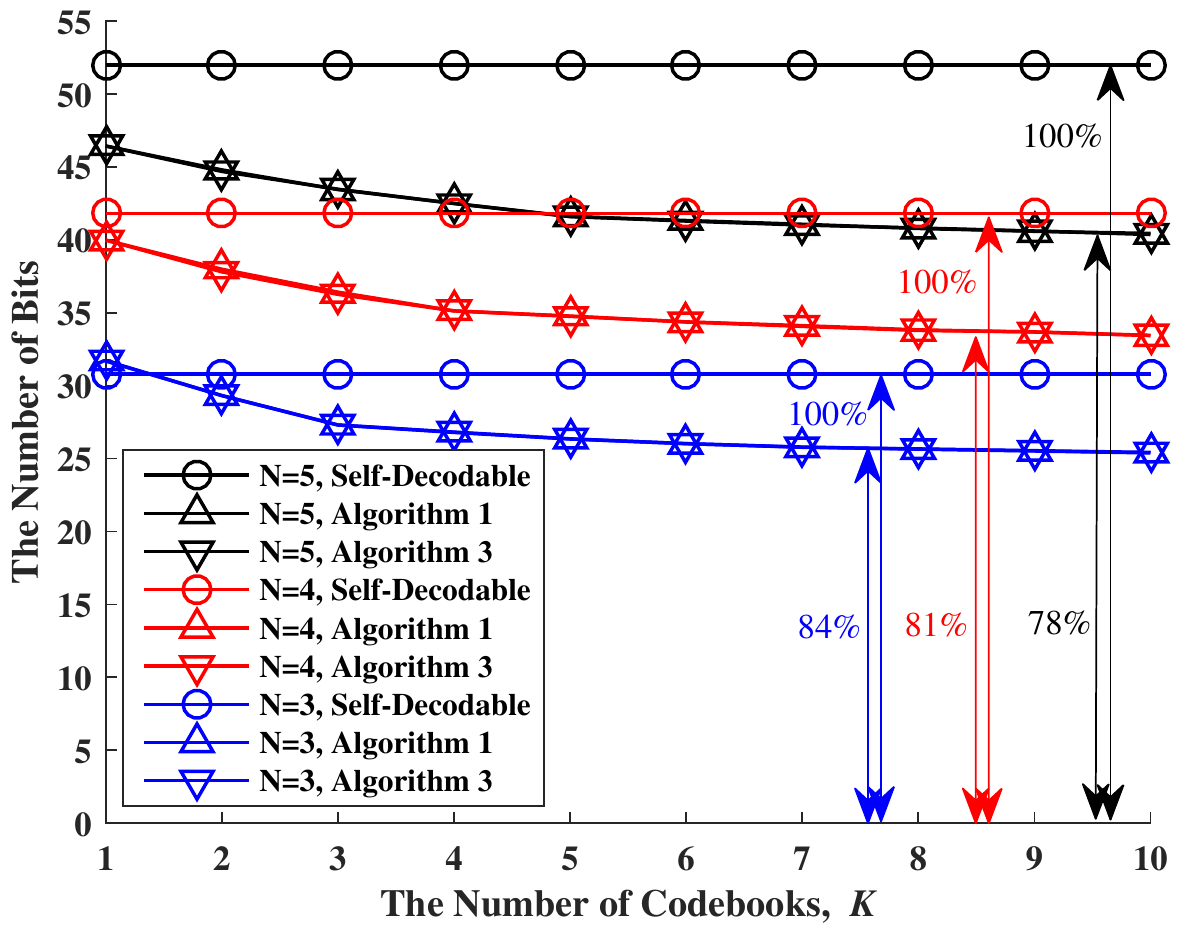}
     \caption{Expected number of bits versus the number of codebooks under  discrete user preferences.}    
     \label{fig1}  
 \end{figure}
 
 Fig. \ref{fig1} presents the expected number of bits to represent a requested content item versus the number of codebooks under  discrete user preferences, where the total number of SPVs with nonzero probability is set to be $J=1000$. These SPVs are uniformly sampled from $\Omega$. It is not surprising that the larger the alphabet, the greater the number of bits needed  to represent a requested content item. To demonstrate the potential of edge source coding, we compare our algorithms with a traditional self-decodable method, in which each requested content item is encoded according to its SPV and carries a codebook for  decoding. Thus, the self-decodable method encode a content item with SPV $\bm{p}$ into $LH(\bm{p}) +  \sum_{n=1}^{N}\log p_n$ bits. In Fig. \ref{fig1}, it can be seen the proposed  DCA based and $k$-means++ based algorithms, i.e., Algorithms \ref{dca_clustering} and \ref{algo1}, always achieve similar performance. Except the case $N=3$ and $K=1$, Algorithms  \ref{dca_clustering} and \ref{algo1}   use fewer bits to represent a requested content item than the self-decodable method does. When $N=5$ and $K=10$, our algorithms can even save 22\% of the total number of bits. For fixed $N$, the number of bits to represent a requested content item decreases with an increase in the number of codebooks. This is because the mismatch between SPVs and  codebooks is small when more codebooks are used. However, an increase in the number of codebooks $K$ only slightly reduces the number of bits when $K$ is large. 

\begin{figure}[t] 
    \centering
    \subfigure[Uniform]{
        \begin{minipage}[t]{0.5\linewidth}
            \centering
            \includegraphics[width=2.8in]{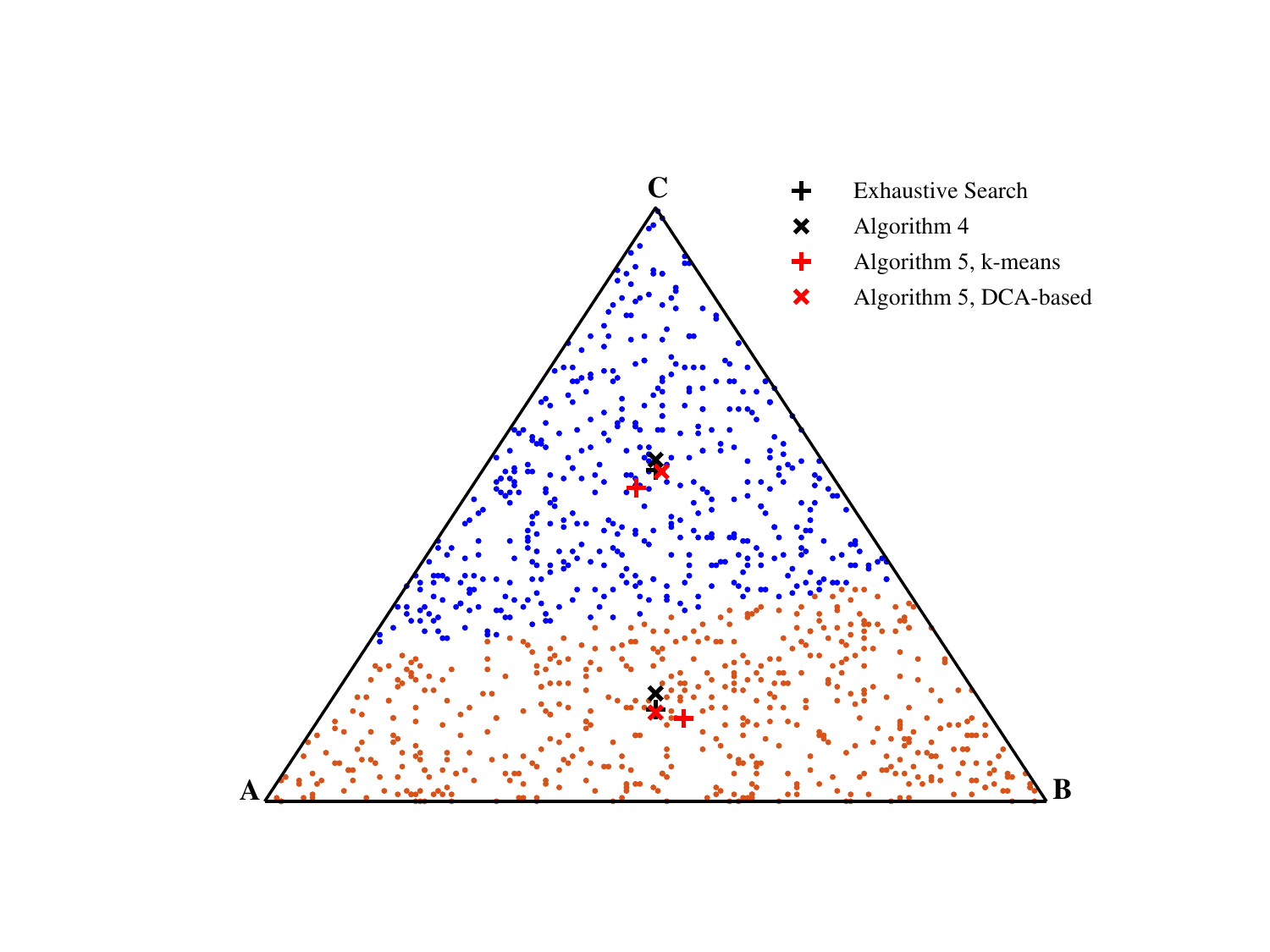}
        \end{minipage}
    }%
    \subfigure[Nonuniform]{
        \begin{minipage}[t]{0.5\linewidth}
            \centering
            \includegraphics[width=2.8in]{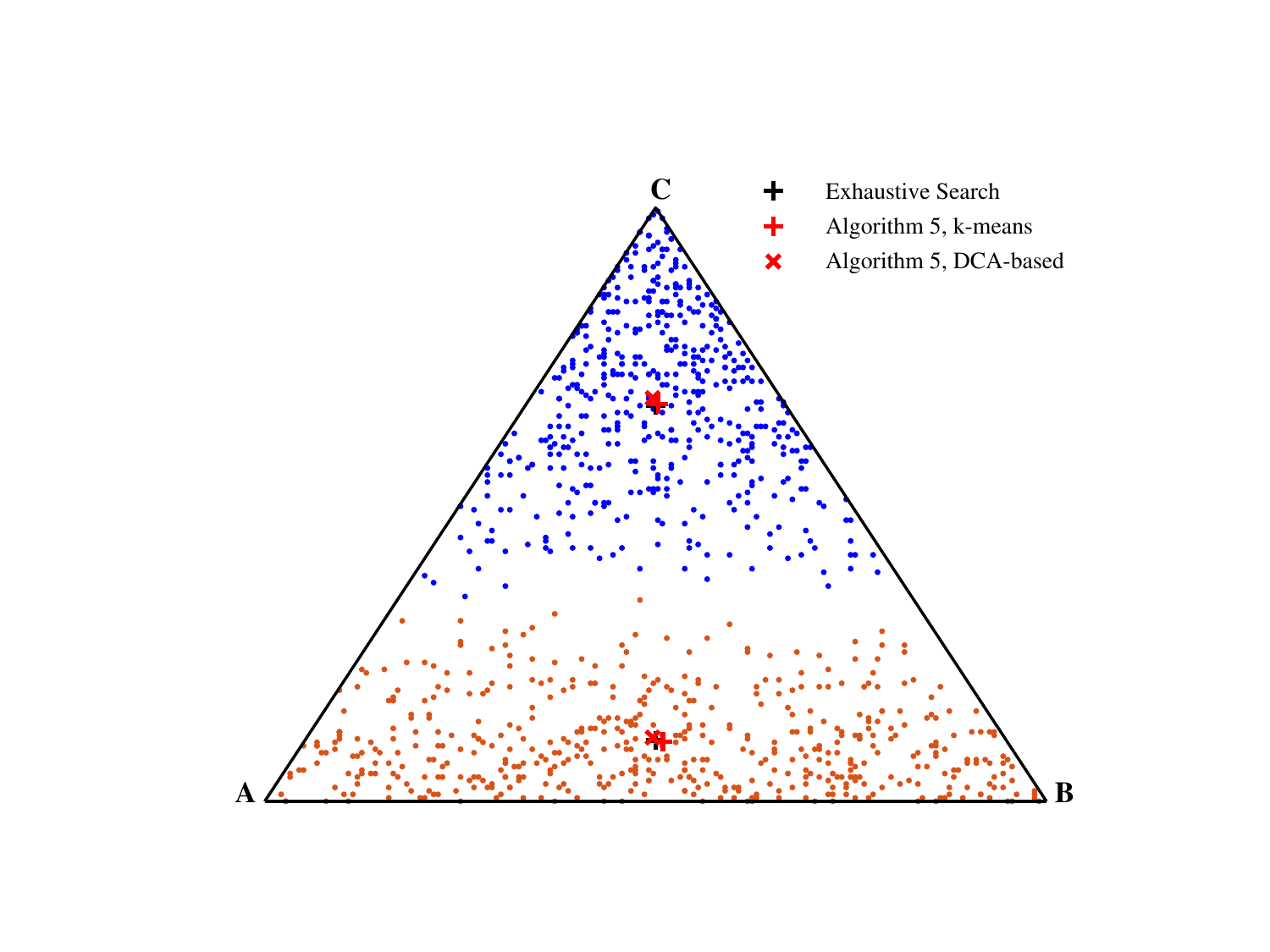}
        \end{minipage}
    }
\centering
\caption{Edge source coding under uniform and nonuniform user preferences}   \label{fig23}
\end{figure}

 Fig. \ref{fig23}  illustrates the performance of Algorithms \ref{algo2} and \ref{algo3}. For the sake of illustration, we set $N=3$ and $K=2$ in the simulation. The notation $A,B,$ and $C$ in these two figures has identical meaning to that in Fig. \ref{n=3k=2}. In the subfigures (a) and (b), the function $f$ is  respectively set to be $f=1/{\int_{\bm{p}\in \Omega}d\bm{p}}$ and $f=(||\bm{p}-\bm{p}_0||_2)/{\int_{\bm{p}\in \Omega}||\bm{p}-\bm{p}_0||_2d\bm{p}}$, where $\bm{p}_0 = [1/3,1/3,1/3]$ and $||\cdot||_2$ denotes the Euclidean norm. Thus, the user preferences are uniform and nonuniform respectively in the two subfigures. The black cross and  x mark represent the solutions given by exhaustive search and Algorithm $\ref{algo2}$, respectively. The red  x mark and cross represent the solutions when Algorithm \ref{algo3} calls  Algorithms \ref{dca_clustering} and \ref{algo1}, respectively. The blue and orange dots represents the sample points when Algorithm \ref{algo3} runs ($S=1000$ points are sampled).  The color of a dot indicates the codebook used by this dot. Fig. \ref{fig23}  reveals that the optimal codebook designs vary with user preferences. It is seen that the codebook designs given by Algorithms \ref{algo2} and \ref{algo3} are very close to the optimal solution given by  exhaustive search. For the uniform user preferences, the expected numbers of bits resulting from Algorithm \ref{algo2},  Algorithm \ref{algo3} calling Algorithm \ref{dca_clustering}, Algorithm \ref{algo3} calling Algorithm \ref{algo1},  and exhaustive search are 29.0682, 29.0486, 29.0816,   and   29.0457, respectively. For the considered nonuniform user preferences, the expected numbers of bits resulting from   Algorithm \ref{algo3} calling Algorithm \ref{dca_clustering}, Algorithm \ref{algo3} calling Algorithm \ref{algo1}, and exhaustive search are 26.2575, 26.2753,  and 27.2523. The gaps between the expected numbers of bits given by our algorithms   and the optimal are very small. 
 
 \begin{figure}
    \centering
    \includegraphics[width=8cm]{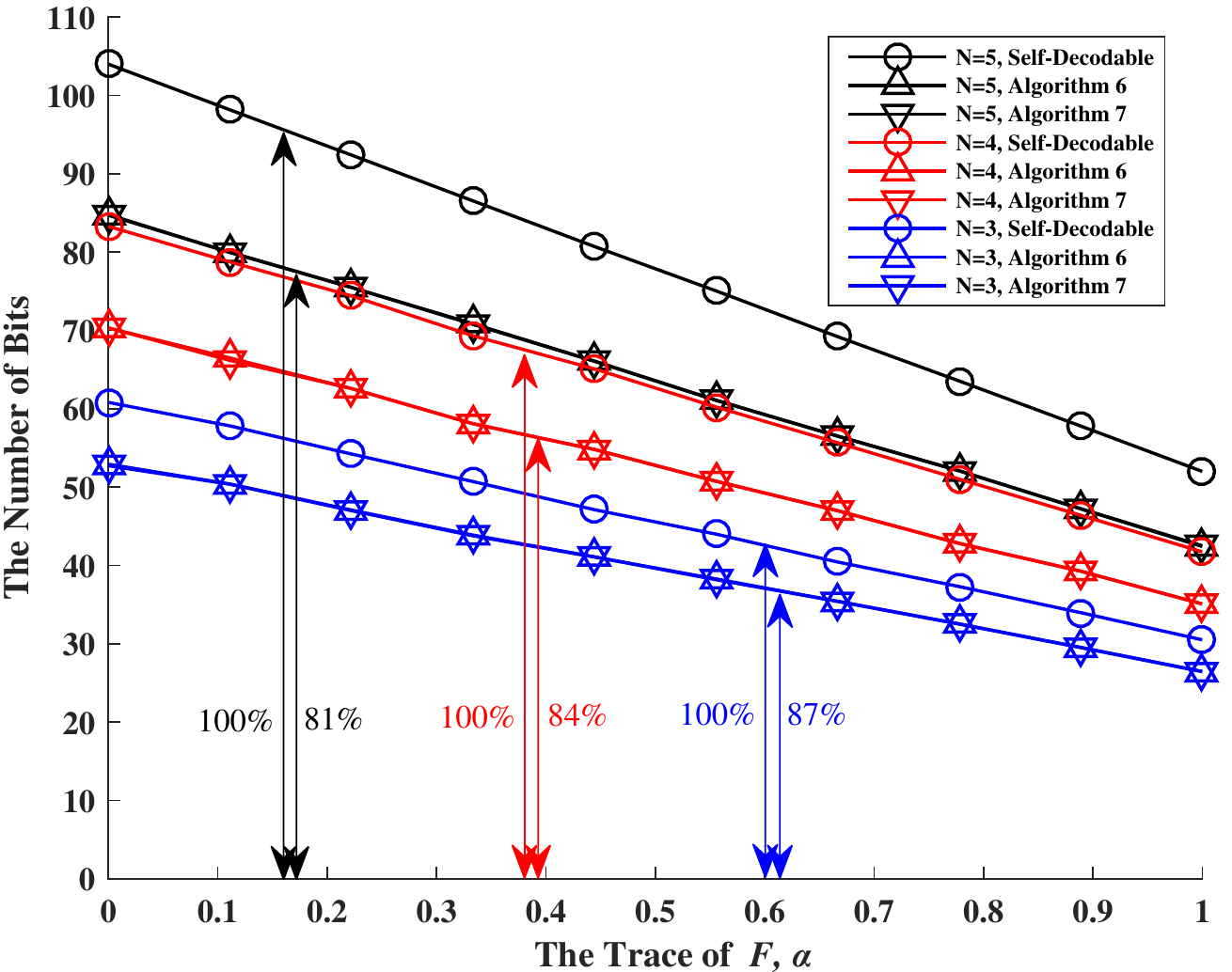}
    \caption{Expected number of bits versus the trace of joint  preference matrix for edge source coding  with two users.}    
    \label{fig4}  
\end{figure}

Fig. \ref{fig4} presents the number of bits to represent a requested content item versus the similarity of the user preferences for edge source coding  with two users. We assume there are $J=1000$ content items which are with SPVs uniformly sampled from $\Omega$. The joint preference matrix $\bm{F}=(f_{i,j})_{J\times J}$ is set to be 
\begin{equation}
    f_{i,j} = \begin{cases}
   \frac{\alpha}{J}, &\text{if } i =j;\\
    \frac{1-\alpha}{J(J-1)}, &\text{others.}
    \end{cases}
\end{equation}
Then, the trace of $\bm{F}$ is $\alpha$, which describes the similarity of the two users' preferences. In Fig. \ref{fig4}, we assume the total numbers of codebooks used by user 1 and user 2 are $K_1^t=4$ and $K_2^t=4$. Again, we compare our algorithms with the self-decodable method, in which each content item is attached with a codebook in transmission.  It can be seen that the expected number of bits to satisfy user requests almost linearly decrease with the trace of $\bm{F}$. This is due to the fact that higher trace of $\bm{F}$ induces more multicasting opportunities. The   DCA based and $k$-means++ based  algorithms, i.e., Algorithms \ref{algo7} and \ref{algo6}, have similar performance. Furthermore, our algorithms always achieve lower number of bits than the self-decodable method does. When $N=5$, our algorithms can save 19\% of the total number of bits. 

 \begin{figure}
    \centering
    \includegraphics[width=8cm]{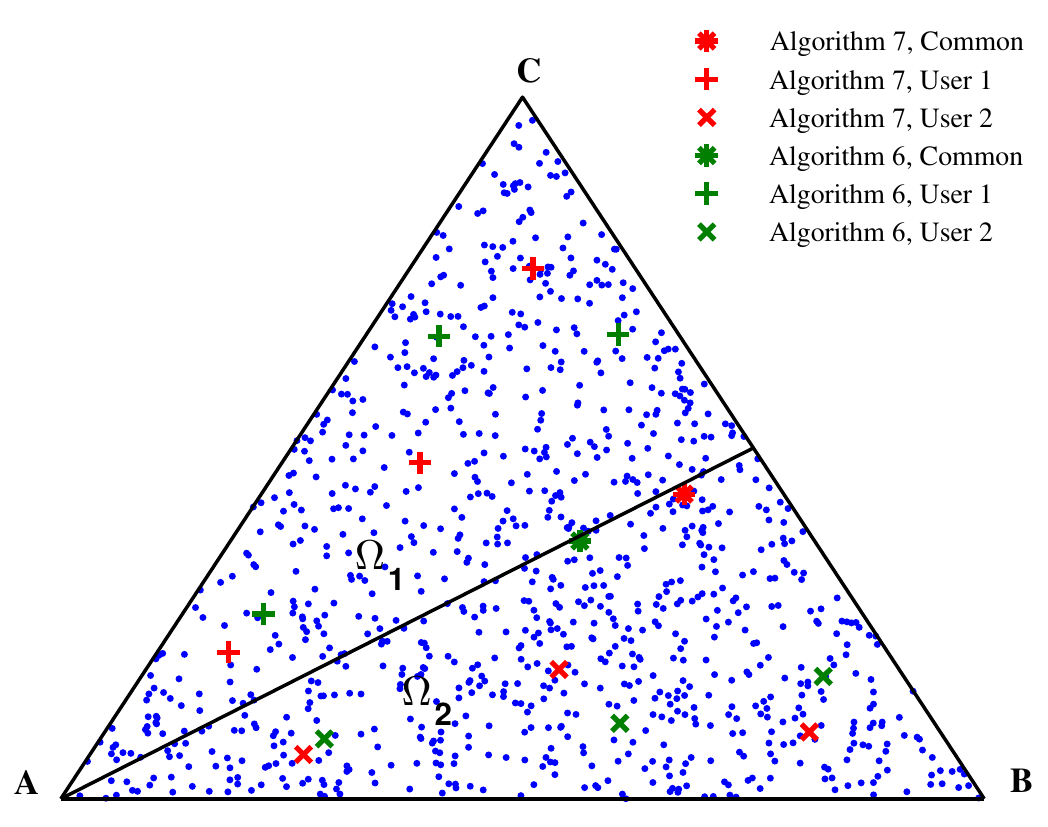}
    \caption{Expected number of bits versus the similarity of user preferences in two-user case.}    
    \label{fig5}  
\end{figure}
Fig. {\ref{fig5}} illustrates optimal codebook designs for $N=3$ and $K_1^t=K_2^t= 4$. There are total $J=1000$ content items.   We assume user 1 is interested only in content items with SPVs in $\Omega_1$ and user 2 is interested only in content items with SPVs in $\Omega_2$.  The green and red asterisks represents the common codebooks given by Algorithms \ref{algo7} and \ref{algo6}, respectively.  The cross and x mark represents the codebooks exclusively used by user 1 and user 2, respectively. It can be seen both the common codebooks are located around the boundary between $\Omega_1$ and $\Omega_2$. The codebook designs resulting from the two algorithms differ sharply, which implies the existence of multiple locally optimal solutions. The numbers  of bits given by Algorithms \ref{algo7} and \ref{algo6} are   51.0691 and 50.9569. Again, the performances of the two algorithms are similar. 

\section{Conclusion and  Directions for Future Work}\label{section8}

In this paper, we have studied user preference aware lossless data compression at the edge, where content items are characterized by SPVs and  user preferences are described by probability density functions.  We have noted that  the statistical distributions of symbols over the links generally differ from that in the information source and have presented an edge source coding method, which employs finitely many codebooks to encode the requested content items at the service provider. An optimization problem has been formulated to give the optimal codebook design for edge source coding, which is however nonconvex in general. The optimal solution for  edge source coding with one codebook has been given by the method of Lagrange multipliers. Furthermore, an optimality condition has been presented. For   edge source coding  under discrete user preferences, the optimization problem reduces to a clustering problem.  DCA based and $k$-means++   algorithms have been proposed to give   codebook designs. For   edge source coding under continuous user preferences, an iterative algorithm has been proposed to give a codebook design according to the optimality condition. To provide a codebook design with low computational complexity, a sampling based algorithm has also been proposed. In addition, we have extended edge source coding to the two-user case and codebooks have been designed to utilize the multicasting opportunities. It has been shown that the solutions of the proposed algorithms are close to the optimal. In the two-user case, the expected number of bits needed to represent a requested content item linearly decreases with the trace of the joint preference matrix. Both theoretical analysis and simulations have verified that the optimal codebook design in edge source coding relies on user preferences. 

Significant future topics include edge source coding with more than two users, more-refined theoretical analysis on the optimal codebook design and the resulting transmission cost, as well as practical user preference model based on real data. 

\begin{appendix}
  Here we show the transformation 
  from the optimal solution of problem (\ref{op_4}) to that of problem (\ref{op_2}).   To show that, we only need to show the existence of 0-1 optimal solution of problem (\ref{op_4}). The core idea is to improve the solution by removing elements unequal to 0 or 1. Suppose $\{r_{j,k}^*: j \in [J], k\in[K]\}$ is an optimal solution of problem (\ref{op_4}) and let $R_{\ref{op_2}}(r_{j,k}^*)$ denote the corresponding optimization value.  We have 
  \begin{equation}
      \sum_{k=1}^{K}r_{j,k}^* = 1,
  \end{equation}
  for $j\in[J]$. From Eq. (\ref{opt_cond3}), we can obtain $\{\bm{q}_k^*:k\in[K]\}$ such that $\bm{q}_k^*$ and $r_{j,k}^*$ form the optimal solution of problem (\ref{op_3}). Let $R_{\ref{op_3}}(\bm{q}^*_k,r_{j,k}^*)$  denote the optimization value of problem (\ref{op_3}). Then we have $R_{\ref{op_3}}(\bm{q}^*_k,r_{j,k}^*)=R_{\ref{op_2}}(r_{j,k}^*)$.  If there exist $j_0$ such that $0<r_{j_0,k}^*<1$ for certain values of $k$, we set 
  \begin{equation}\label{remove}
      r_{j,k}^{**} = \begin{cases}
      r_{j,k}^*, & \text{if } j \neq j_0,\\
      1, & \text{if } j = j_0 \text{ and } k = \arg\min_{k_0\in [K]} D(\bm{p}_j||\bm{q}_{k_0}^*),\\
      0, &  \text{if } j = j_0 \text{ and } k \neq  \arg\min_{k_0\in [K]} D(\bm{p}_j||\bm{q}_{k_0}^*).
      \end{cases}
  \end{equation}
  Then, $\{r_{j,k}^{**}\}$ is feasible for problem ({\ref{op_4}}). In addition, we have
  \begin{equation}
     R_{\ref{op_2}}(r_{j,k}^{**})= R_{\ref{op_3}}(\bm{q}^{**}_k,r_{j,k}^{**})\ge R_{\ref{op_3}}(\bm{q}^{*}_k,r_{j,k}^{**})\ge R_{\ref{op_3}}(\bm{q}^{*}_k,r_{j,k}^{*}) =  R_{\ref{op_2}}(r_{j,k}^{*}),
  \end{equation}
  where $\bm{q}_k^{**}$ is derived from Eq. ({\ref{opt_cond3}}) by substituting $\{r_{j,k}^{**}\}$.
  As a result, $\{r_{j,k}^{**}\}$  is an optimal solution of problem ({\ref{op_4}}). By applying Eq. (\ref{remove}) iteratively, we can remove all the non 0-1 elements of an optimal solution of problem ({\ref{op_4}}) without the loss of optimality. 
  
\end{appendix}

\linespread{0.89}
\bibliographystyle{IEEEtran}

\end{document}